\def\ps@headings{%
\def\@oddhead{\mbox{}\scriptsize\rightmark \hfil \thepage}%
\def\@evenhead{\scriptsize\thepage \hfil \leftmark\mbox{}}%
\def\@oddfoot{}%
\def\@evenfoot{}}
\makeatother \pagestyle{headings}
\title{Heavy Traffic Limits for GI/H/n Queues: Theory and Application}
\begin{document}
\bibliographystyle{IEEEtr}

\author[*]{Yousi Zheng}
\author[*+]{Ness B. Shroff}
\author[+]{Prasun Sinha}
\affil[*]{Department of Electrical and Computer Engineering,}
\affil[+]{Department of Computer Science and Engineering, \authorcr
The Ohio State University, Columbus, Ohio 43210, USA}

%about the article that should go on the front page should be
%placed here. General acknowledgments should be placed at the end of the article.}
%\subtitle{Do you have a subtitle?\\ If so, write it here}

%\titlerunning{Short form of title}        % if too long for running head

%\authorrunning{Short form of author list} % if too long for running head

\maketitle

% author names and affiliations
% use a multiple column layout for up to three different
% affiliations
%\author{\IEEEauthorblockN{Yousi Zheng\IEEEauthorrefmark{1},
%Ness B. Shroff\IEEEauthorrefmark{1}\IEEEauthorrefmark{2}, Prasun
%Sinha\IEEEauthorrefmark{2} }
%\IEEEauthorblockA{\IEEEauthorrefmark{1}Department of Electrical and
%Computer Engineering}
%\IEEEauthorblockA{\IEEEauthorrefmark{2}Department of Computer Science and Engineering\\
%The Ohio State University, Columbus, OH 43210} \vspace{-0.4cm}}

%\author{
%\IEEEauthorblockN{Yousi Zheng}
%\IEEEauthorblockA{Dept of ECE\\
%Ohio State University\\
%zheng.540@osu.edu} \and \IEEEauthorblockN{Ness Shroff}
%\IEEEauthorblockA{Dept of ECE \& CSE\\
%Ohio State University\\
%shroff@ece.osu.edu} \and \IEEEauthorblockN{Prasun Sinha}
%\IEEEauthorblockA{Dept of CSE\\
%Ohio State University\\
%prasun@cse.ohio-state.edu} \and \IEEEauthorblockN{Jian Tan}
%\IEEEauthorblockA{T.J.\ Watson Research
%Center\\IBM\\jian.jiantan@gmail.com} }
%

%\begin{document}
%\bibliographystyle{IEEEtr}
%\maketitle
%

\begin{abstract}
%\boldmath
We consider a GI/H/n queueing system. In this system, there are
multiple servers in the queue. The inter-arrival time is general and
independent, and the service time follows hyper-exponential
distribution. Instead of stochastic differential equations, we
propose two heavy traffic limits for this system, which can be
easily applied in practical systems. In applications, we show how to
use these heavy traffic limits to design a power efficient cloud
computing environment based on different QoS requirements.
\end{abstract}

\newtheorem{theorem}{Theorem}
\newtheorem{corollary}[theorem]{Corollary}
\newtheorem{lemma}[theorem]{Lemma}
\newtheorem{proposition}[theorem]{Proposition}
\newtheorem{remark}[theorem]{Remark}
\newtheorem{claim}[theorem]{Claim}

\section{Introduction}
%Instead of fluid-based model and convexity
%assumptions in \cite{info11}, we show the design criterion of the
%clouds in heavy traffic scenarios based on their QoS.
\label{intro}

Many large queueing systems, like call centers and data centers,
contain thousands of servers. For call centers, it is common to have
500 servers in one call center \cite{call_center}. For data centers,
Google has more than 45 data centers as of 2009, and each of them
contains more than 1000 machines \cite{google_platform}. When the
number of servers goes to infinity, many queueing systems should be
stable as long as the traffic intensity $\rho_n<1$ (i.e., the
arrival rate is smaller than the service capacity). The traffic
intensity for a queueing system with $n$ servers can be thought of
as the rate of job arrivals divided by the rate at which jobs are
serviced. At the same time, the queueing systems should work
efficiently, which means that $\rho_n$ should approach 1, i.e.,
$\lim\limits_{n\rightarrow\infty}\rho_n=1$. This regime of operation
is called the \emph{heavy traffic regime}. Our paper focuses on
establishing heavy traffic limits, and using these limits to design
a power efficient cloud based on different QoS requirements.

Some classical results on heavy traffic limits are given by Iglehart
in \cite{iglehart1965}, Halfin and Whitt in \cite{whitt1981}, and
summarized by Whitt in Chapter 5 of his recent book
\cite{whitt_book}. This heavy traffic limit ($(1-\rho_n)\sqrt{n}$
goes to a constant as $n$ goes to infinity) is now called the
Halfin-Whitt regime. Recently, the behavior of the normalized queue
length in this regime has been studied by A. A. Puhalskii and M. I.
Reiman \cite{ph}, J. Reed \cite{reed}, D. Gamarnik and P. Momeilovic
\cite{steady}, and Ward Whitt \cite{whitt2005,diffusion}. Based on
these studies, some design and control policies are proposed in
\cite{impatient,dim,dnc,level}.

Our work differs from prior work in three key aspects. First,
literature on heavy traffic limits that is based on analysis of call
center systems does not capture various unique features of large
queueing systems today, such as the cloud computing environment.
Many of those works assume a Poisson arrival process and exponential
service time \cite{impatient,dim,dnc,level}. Perhaps appropriate for
smaller systems, these models need to be generalized for today's
larger systems such as increasingly complex call-centers and cloud
computing environments. The arrival process in such complex and
large systems may be independent, but more general. More
importantly, the service times of jobs are quite varied and unlikely
to be accurately modeled by an exponential service time
distribution. In \cite{whitt2005}, Whitt also considers the
hyper-exponential distributed service time, but only with two stages
and where one of them always has zero mean. Second, although some
QoS metrics (especially Quality-Efficiency-Driven (QED)) have been
extensively studied in some call center scenarios
\cite{impatient,QED,dim}, the QoS requests can be more complex,
because of the wide variety of application needs, especially in the
cloud computing environment \cite{Berkeley,Market,Define}. And
third, while there are studies that give heavy traffic solutions for
more general scenarios \cite{ph,reed}, these solutions can only be
described by complex stochastic differential equations, which are
quite cumbersome to use and provide little insight.

\emph{In this paper, we build a system model for general and
independent inter-arrival process and hyper-exponentially
distributed service times.} As mentioned earlier, the general
arrival process can be used to characterize a variety of arrival
distributions for the queueing system. The main motivation for
studying the hyper-exponential distribution is that it can capture
the high degree of variability in the service time. For example, the
hyper-exponential distribution can characterize any
\emph{coefficient of variation} (standard deviation divided by the
mean) greater than $1$. Since the service time of jobs is expected
to be highly variable from job to job, the hyper-exponential
distribution is well suited to model the service times for today's
queueing systems.

To satisfy the QoS and save operation cost at the same time, we
characterize the performance of the queueing system for four
different types of QoS requirements: Zero-Waiting-Time (ZWT),
Minimal-Waiting-Time (MWT), Bounded-Waiting-Time (BWT) and
Probabilistic-Waiting-Time (PWT) (the precise definitions are given
in Section~\ref{model}). Since the heavy traffic limits for the ZWT
and PWT classes can be directly derived from the current literature
(details in our technical report \cite{Yousi}), we simply list their
results, and focus instead on the MWT and BWT classes for which we
develop new heavy traffic limits. We use the heavy traffic limits to
characterize the relationship between the traffic intensity and the
number of servers in the queueing systems.

In applications, we show how to use these heavy traffic limit
results to determine the number of active machines in a cloud to
ensure that the QoS requirements are met and the cloud operates in a
stable and cost efficient manner. Cloud computing environments are
rapidly deployed by the industry as a means to provide efficient
computing resources. A significant fraction of the overall cost of
operating a cloud is the amount of power it consumes, which is
related to the number of machines in operation. In order to
efficiently manage the power cost associated with cloud computing,
we develop the foundations for designing a cloud computing
environment. In particular, we aim to determine how many machines a
cloud should have to sustain a specific system load and a certain
level of QoS, or equivalently how many machines should be kept awake
at any given time. Finally, using simulations we show that depending
on the QoS requirements of the cloud, the cloud needs substantially
different number of machines. We also show that the number of
operational machines in simulations are consistent with the proposed
design based on the new set of heavy traffic limit results. Although
the number of operational machines is derived from heavy traffic
limits, simulation results indicate that it is a good methodology,
even when the number of machines is finite, but large.

%In the process of achieving the design criteria, we also develop a
%new set of heavy traffic limit results that allow for general and
%independent arrival processes and hyper-exponential distributed
%service times.

The main contributions of this paper are:

\begin{itemize}
\item
This paper makes new contributions to heavy traffic analysis, in
that it derives new heavy traffic limits for two important QoS
classes (MWT and BWT) for queueing systems when the arrival process
is general and the service times are hyper-exponentially
distributed.
\item
Using the heavy traffic limits results, this paper answers the
important question for enabling a power efficient cloud computing
environment as an application: How many machines should a cloud have
to sustain a specific system load and a certain level of QoS, or
equivalently how many machines should be kept awake at any given
time?
\end{itemize}

The paper is organized as follows. In Section~\ref{model}, we
present the system model of the queueing system, and describe the
four different classes of QoS requirements. Based on this model, we
develop heavy traffic limits results in
Section~\ref{htl_analysis_c2} and Section~\ref{htl_analysis_c3} for
the MWT and BWT classes correspondingly. Using these heavy traffic
limits results and the results in our technical report \cite{Yousi},
in Section~\ref{cloud} we consider cloud computing environment as an
application and compute the operational number of machines needed
for different classes of clouds. Simulation results are also
provided in Section~\ref{evaluation}. Finally, we conclude this
paper in Section~\ref{conclude}.

\section{System Model and QoS Classes}
\label{model}
\subsection{System Model and Preliminaries}

%We assume that the cloud consists of a large number of machines, out
%of which $n$ are active/operational at any given time. A larger $n$
%will result in better QoS at the expense of greater power
%consumption. The operation of the cloud can now be viewed in
%queueing terms, with arrivals corresponding to jobs that arrive to
%the cloud, which are then served by the machines in that cloud. We
%assume that the job arrivals to the cloud are independent with rate
%$\lambda_n$ and coefficient of variation $c$.

We assume that the queueing system consists of a large number of
servers, out of which $n$ are active/operational at any given time.
A larger $n$ will result in better QoS at the expense of higher
operational cost.

We assume that the job arrivals to the system are independent with
rate $\lambda_n$ and coefficient of variation $c$.

We also assume that the service time $v$ of the system satisfies the
hyper-exponential distribution as given below.

\begin{equation}
\label{H} P(v> t)=\sum_{i=1}^k P_i e^{-\mu_i t}
\end{equation}

Without loss of generality, we assume that

\begin{equation}
\label{condition} \begin{array}{c}
0<\mu_1<\mu_2<...<\mu_k<\infty;\\
P_i>0,\ \forall i\in{1,...k};\ \sum_{i=1}^{k}{P_i}=1.
\end{array}
\end{equation}

The maximum buffer size that holds the jobs that are yet to be
scheduled is assumed to be unbounded. The service priority obeys a
first-come-first-serve (FCFS) rule. In this paper we consider a
service model where each job is serviced by one server. All servers
are considered to have similar capability.

%Using virtual machines, one job can be hosted by multiple physical
%servers. Note that the number of physical servers that need to power
%on is roughly proportional to the number of virtual machine
%requests. Therefore, in this paper, we will not distinguish virtual
%machines and their host servers in our model. The detailed study
%with virtual machines technique should be a future work.

\subsection{Definition of QoS Classes}

Before we give the definition of different QoS classes, we first
provide some notations that will be used throughout this section.
Here, we let $n$ denote the total number of servers. For a given
$n$, we let $T_n$ denote the time that a job is in the system before
departure, $Q_n$ denote the total number of jobs in the system,
$W_n$ denote the time that the job waits in the system before being
processed. For two functions $f(n)$ and $g(n)$ of $n$,
$g(n)=o(f(n))$ if and only if $\lim\limits_{n
\rightarrow\infty}g(n)/f(n)=0$. Also, we use $\sim$ as equivalent
asymptotics, i.e., $f(n)\sim g(n)$ means that
$\lim\limits_{n\rightarrow\infty}f(n)/g(n)=1$. We also use
$\phi(\cdot)$ and $\Phi(\cdot)$ as probability density function and
cumulative distribution function of normal distribution, and use
$\varphi_X(\cdot)$ as the characteristic function of the random
variable $X$.

We now provide precise definitions of the various QoS classes
described in the introduction. Since we are interested in studying
the performance of the system in the heavy traffic limit, we let the
traffic intensity $\rho\rightarrow 1$ as $n \rightarrow \infty$ in
the case of each QoS class we study.

\subsubsection{Zero-Waiting-Time (ZWT) Class}
A system of the ZWT class is one for which

$$\lim\limits_{n\rightarrow\infty}P\{Q_n\geq n\}=0$$

The ZWT class corresponds to the class that provides the strictest
of the QoS requirements we consider here. For such systems, the
requirement is that an arriving job needs to wait in the queue is
zero. Loosely speaking, a system of the ZWT class corresponds to
having a QoS requirement that the jobs need to be served \emph{as
soon as} they arrive into the system.

\subsubsection{Minimal-Waiting-Time (MWT) Class}

For this class, the QoS requirement is
$$\lim\limits_{n\rightarrow\infty}P\{Q_n\geq n\}=\alpha,$$
where $\alpha$ is a constant such that $0<\alpha<1$.

This requirement is less strict than the ZWT class. There is a
nonvanishing probability that the jobs queue of the system is not
empty. Roughly speaking, a system of the MWT class corresponds to
the situation when jobs are served with some probability as soon as
they arrive into the system.

\subsubsection{Bounded-Waiting-Time (BWT) Class}
For this class,
$$\lim\limits_{n\rightarrow\infty}P\{Q_n\geq n\}=1$$
$${P\{W_n > t_1\}}\sim{\delta_n},$$
where $$\lim\limits_{n\rightarrow\infty}\delta_n=0.$$

The BWT class corresponds to the class for which the probability of
waiting time $W_n$ to exceed a constant threshold $t_1$ decreases to
0 as $n$ goes to infinity. The decreasing rate has equivalent
asymptotics with $\delta_n$. This means that the waiting time $W_n$
is between 0 and $t_1$ with probability 1, as $n$ goes to infinity.

\subsubsection{Probabilistic-Waiting-Time (PWT) Class}
For this class,
$$\lim\limits_{n\rightarrow\infty}P\{Q_n\geq n\}=1$$
$$\lim\limits_{n\rightarrow\infty}P\{W_n > t_2\}=\delta,$$
where $\delta$ is a given constant and satisfies $0<\delta<1$.

The PWT class corresponds to the class that provides the least
strict QoS requirements of the four types of systems considered
here. Hence, the probability that the waiting time $W_n$ is greater
than some constant threshold $t_2$ is non-zero, for large enough
$n$. This means that the QoS requirement for this system is such
that the waiting time $W_n$ is between 0 and $t_2$ with probability
$1-\delta$, as $n$ goes to infinity.

Further discussions and details on the four classes is given in
Section~\ref{evaluation} and our technical report \cite{Yousi}. For
the rest of the paper, we will mainly focus on developing new heavy
traffic limits for the MWT and BWT classes.

\section{Heavy Traffic Limit Analysis for the MWT class}

\label{htl_analysis_c2}

The following result tells us how the number of servers must scale
in the heavy traffic limit for the MWT class.

\begin{proposition}
\label{hyper}

Assume
\begin{equation}
\label{c2-rho-limit} \lim\limits_{n\rightarrow\infty}\rho_n=1,
\end{equation}

\begin{equation}
\label{p1} \lim\limits_{n\rightarrow\infty}P\{Q_n\geq n\}=\alpha,
\end{equation}

then

\begin{equation}
\label{ineq} L\leq\lim\limits_{n\rightarrow
\infty}(1-\rho_n)\sqrt{n}\leq U,
\end{equation}

where

\begin{equation}
\label{upper-bound}
U=\left(\sum_{i=1}^k{\beta_U^{(i)}\sqrt{\frac{P_i}{\mu_i}}}\right)\sqrt{\mu},
\end{equation}

\begin{equation}
\label{lower-bound}
L=\max_{i\in{\{1,...k\}}}\left\{{\beta_L^{(i)}\sqrt{\frac{P_i}{\mu_i}}}\right\}\sqrt{\mu},
\end{equation}

\begin{equation}
\label{def-rho}
\mu=\left(\sum_{i=0}^k\frac{P_i}{\mu_i}\right)^{-1},\quad
\rho_n=\frac{\lambda_n}{n\mu},
\end{equation}

\begin{equation}
\label{whitt-bound1}
\begin{array}{c}
\beta_U^{(i)}=(1+\frac{c^2-1}{2}P_i)\psi_U,\\
\frac{\alpha}{k}=[1+\sqrt{2\pi}\psi_U\Phi(\psi_U)\exp{(\psi_U^2/2)}]^{-1},\\
\end{array}
\end{equation}

\begin{equation}
\label{whitt-bound2}
\begin{array}{c}
\beta_L^{(i)}=(1+\frac{c^2-1}{2}P_i)\psi_L,\\
\alpha=[1+\sqrt{2\pi}\psi_L\Phi(\psi_L)\exp{(\psi_L^2/2)}]^{-1},\\
\end{array}
\end{equation}

%\begin{equation}
%\label{ci} c_i=\sqrt{1+(c^2-1)P_i},
%\end{equation}

\begin{equation}
\label{whitt-bound3}
\begin{array}{c}
0\leq\alpha\leq 1,\quad 0\leq\beta_L\leq \infty,\quad
0\leq\beta_U\leq
\infty.\\
\end{array}
\end{equation}

\end{proposition}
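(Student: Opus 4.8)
The plan is to reduce the GI/H/n system to a comparison with GI/M/n systems whose service rates bracket those of the hyper-exponential, and then invoke the classical Halfin–Whitt scaling for GI/M/n (as summarized in Whitt's book, Chapter 5) to get matching upper and lower estimates on $(1-\rho_n)\sqrt{n}$. The key observation is that the $H_k$ service distribution $P(v>t)=\sum_{i=1}^k P_i e^{-\mu_i t}$ is a mixture of $k$ exponentials, and the condition $P\{Q_n\ge n\}\to\alpha$ is a statement about the stationary probability of delay. First I would recall that for a GI/M/n queue with arrival coefficient of variation $c$ and service rate $\mu_i$, operating in the Halfin–Whitt regime with $(1-\rho_n)\sqrt{n}\to\beta^{(i)}$, the limiting delay probability is governed by the functional $\alpha(\beta) = [1+\sqrt{2\pi}\,\beta^{*}\Phi(\beta^{*})\exp((\beta^{*})^2/2)]^{-1}$ with the effective parameter $\beta^{*}$ rescaled by the factor $(1+\tfrac{c^2-1}{2}P_i)$ that accounts for the extra variability in the arrival stream as seen by the $i$-th exponential phase; this is exactly how $\beta_U^{(i)}$ and $\beta_L^{(i)}$ are defined in \eqref{whitt-bound1}–\eqref{whitt-bound2}.

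For the \textbf{upper bound} $U$, I would dominate the $H_k$ server: route each arriving job, upon entering service, to phase $i$ with probability $P_i$ and give it an $\mathrm{Exp}(\mu_i)$ service; the total workload, and hence the probability that the queue length reaches $n$, can be bounded by summing the contributions of the $k$ phases. Aggregating the per-phase scalings $(1-\rho_n)\sqrt{n}\to\beta_U^{(i)}\sqrt{P_i/\mu_i}\sqrt{\mu}$ and requiring that the \emph{aggregate} delay probability equal $\alpha$ forces the calibration $\alpha/k = [1+\sqrt{2\pi}\psi_U\Phi(\psi_U)\exp(\psi_U^2/2)]^{-1}$ in \eqref{whitt-bound1} — the $1/k$ appearing because a crude union/superposition bound splits the delay event across the $k$ phases. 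Summing $\beta_U^{(i)}\sqrt{P_i/\mu_i}$ over $i$ and multiplying by $\sqrt{\mu}$ gives $U$. For the \textbf{lower bound} $L$, I would instead single out one phase: since the slowest-decaying exponential components keep jobs in service longest, the system's delay probability is at least that of the GI/M/n system built from phase $i$ alone (appropriately weighted), for \emph{each} $i$; calibrating that single-phase delay probability to $\alpha$ directly (no $1/k$) gives $\psi_L$ via \eqref{whitt-bound2}, and taking the max over $i$ of $\beta_L^{(i)}\sqrt{P_i/\mu_i}\sqrt{\mu}$ yields the best such lower bound, which is $L$.

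The two remaining ingredients are: (i) verifying that $\mu$ as defined in \eqref{def-rho} is indeed the reciprocal mean service rate, so that $\rho_n=\lambda_n/(n\mu)$ is the correct traffic intensity — a one-line computation from $E[v]=\sum_i P_i/\mu_i$; and (ii) justifying that the limits $\lim_n (1-\rho_n)\sqrt{n}$ exist (or passing to subsequences) and that the bracketing of delay probabilities $\alpha$ transfers to bracketing of the scaling constant, which uses monotonicity of $\beta\mapsto\alpha(\beta)$. The main obstacle I anticipate is making the phase-decomposition comparison rigorous: the $H_k$ queue is \emph{not} literally a superposition of $k$ independent GI/M/n queues, because all phases share the same $n$ servers and the same FCFS arrival stream, so the bounds are genuine stochastic-comparison / sample-path-domination arguments rather than exact identities. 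Controlling the variability correction $(1+\tfrac{c^2-1}{2}P_i)$ — which comes from how the departure process of the non-$i$ phases acts as a modified arrival stream to phase $i$ — and showing it is the right first-order term in the heavy-traffic scaling is the technically delicate step; everything else is assembling known GI/M/n asymptotics and algebra.
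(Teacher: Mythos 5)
Your proposal follows essentially the same route as the paper: decompose the hyper-exponential service into $k$ exponential phases routed to separated GI/M/$n_i$ sub-queues, apply the Halfin--Whitt result (Theorem 4 of \cite{whitt1981}) per phase with target $\alpha/k$ and a union bound to get $U$, and calibrate a single dominant phase to $\alpha$ (others at fluid level) and take the max to get $L$, summing server counts and translating back to the $(1-\rho_n)\sqrt{n}$ scaling; the comparison step you flag as delicate is precisely what the paper handles by appeal to asymptotic optimality of FCFS and its technical report. One small correction: the factor $1+\frac{c^2-1}{2}P_i$ does not come from the departure processes of the other phases acting as a modified arrival stream, but from Bernoulli thinning of the original arrivals --- each sub-queue's inter-arrival time is a geometric-random sum of the original inter-arrival times, giving $c^{(i)}=\sqrt{1+(c^2-1)P_i}$ and hence $\beta^{(i)}=\frac{1+(c^{(i)})^2}{2}\psi$, which is the content of the paper's lemma.
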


In Proposition~\ref{hyper}, $\psi_U$ is the solution of
Eq.~(\ref{whitt-bound1}), and $\beta_U^{(i)}$ can be computed using
$\psi_U$. Similarly, $\psi_L$ is the solution of
Eq.~(\ref{whitt-bound2}), and $\beta_L^{(i)}$ can be computed using
$\psi_L$. Thus, upper bound $U$ in Eq.~(\ref{upper-bound}) and lower
bound $L$ in Eq.~(\ref{lower-bound}) can be achieved using
$\beta_U^{(i)}$, $\beta_L^{(i)}$, and other parameters.

To prove Proposition~\ref{hyper}, we construct an \emph{artificial
system structure}. The arrival process and the capacity of a single
server are same as the original system. In the artificial system, we
assume that there are $k$ types of jobs. For each arrival, we know
the probability of $i^{th}$ type job is $P_i$, and the service time
of each $i^{th}$ type job is exponentially distributed with mean
$1/\mu_i$. Thus, the service time $v$ of the system can be viewed as
a hyper-exponential distribution which satisfies Eq.\ (\ref{H}). We
also assume that there is an omniscient scheduler for the artificial
system. This scheduler can recognize the type of arriving jobs, and
send them to the corresponding queue. For arrivals of type $i$, the
scheduler sends them to the $i^{th}$ queue, which contains $n_i$
servers. Then the arrival rate of the $i^{th}$ queue is $P_i
\lambda_n$. Also, the priority of each separated queue obeys the
FCFS rule. The artificial system is shown in Fig.~\ref{comp-sys}.

\begin{figure}
\centering
\includegraphics[width=0.6\textwidth]{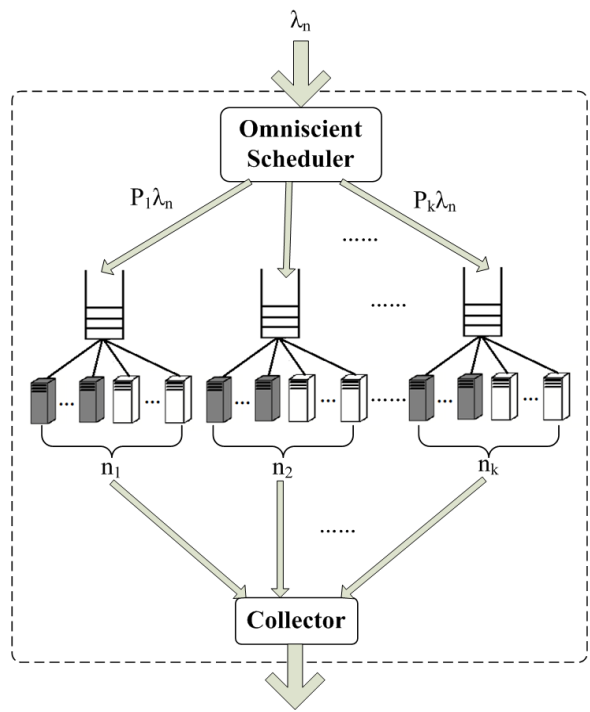}
\caption{Artificial System Structure} \label{comp-sys}
\end{figure}

\begin{lemma}
For the $i^{th}$ separated queue, the inter-arrival time
$\{Y_j^{(i)},\ j=1,2,...\}$ is i.i.d., and the coefficient of
variance $c^{(i)}=\sqrt{1+(c^2-1)P_i}$.
\end{lemma}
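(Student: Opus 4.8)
\emph{Proof idea.} The plan is to recognize the arrival stream feeding the $i^{th}$ separated queue as an \emph{independent Bernoulli thinning} of the original renewal arrival process: the omniscient scheduler routes an arrival to queue $i$ precisely when its type label equals $i$, and the type labels are i.i.d.\ with $P\{\text{type}=i\}=P_i$, independent of the arrival epochs. Let $\{X_m,\ m\ge 1\}$ denote the original inter-arrival times, i.i.d.\ with mean $\bar x=1/\lambda_n$ and variance $c^2\bar x^2$. Then the gap between two consecutive type-$i$ arrivals is a sum of a \emph{geometric} number of consecutive $X_m$'s:
\begin{equation}
\label{geomsum}
Y_j^{(i)}\ \overset{d}{=}\ \sum_{m=1}^{N}X_m,\qquad P\{N=\ell\}=(1-P_i)^{\ell-1}P_i\quad(\ell\ge 1),
\end{equation}
with $N$ independent of $\{X_m\}$ (the second type-$i$ arrival always terminates the run, so $N\ge 1$).

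First I would verify the i.i.d.\ assertion. Decompose the original arrivals into consecutive blocks, each block ending at a type-$i$ arrival; block $j$ consists of $N_j$ original arrivals, where $N_j$ depends only on the type labels inside that block, and $Y_j^{(i)}$ is the sum of the $N_j$ inter-arrival times inside that block. Distinct blocks involve disjoint sets of type labels and disjoint sets of inter-arrival times; since the type-label sequence and the inter-arrival sequence are each i.i.d.\ and mutually independent, the pairs $\bigl(N_j,\{X_m:\ m\ \text{in block}\ j\}\bigr)$ are i.i.d.\ in $j$, and hence so are the $Y_j^{(i)}$ (with the usual convention at $j=1$). This also establishes Eq.~\eqref{geomsum}.

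Next I would compute the first two moments of Eq.~\eqref{geomsum} by the standard random-sum identities. Since $E[N]=1/P_i$ and $\mathrm{Var}(N)=(1-P_i)/P_i^2$, Wald's identity gives $E\bigl[Y_j^{(i)}\bigr]=\bar x/P_i$ — consistent with the stated arrival rate $P_i\lambda_n$ of the $i^{th}$ queue — and the Blackwell--Girshick variance formula (conditioning on $N$) gives
\begin{equation}
\label{varsum}
\mathrm{Var}\bigl(Y_j^{(i)}\bigr)=E[N]\,\mathrm{Var}(X_1)+\mathrm{Var}(N)\,\bigl(E[X_1]\bigr)^2=\frac{\bar x^2}{P_i^2}\bigl(1+(c^2-1)P_i\bigr).
\end{equation}
Dividing Eq.~\eqref{varsum} by $\bigl(E[Y_j^{(i)}]\bigr)^2=\bar x^2/P_i^2$ yields $\bigl(c^{(i)}\bigr)^2=1+(c^2-1)P_i$, i.e.\ $c^{(i)}=\sqrt{1+(c^2-1)P_i}$, as claimed.

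The moment computation is routine; the step that needs care is the i.i.d.\ claim, i.e.\ making the block decomposition rigorous and checking that the boundary block at $j=1$ is handled under whatever convention is adopted for the first inter-arrival time. Everything else reduces to Wald's identity and the conditional-variance formula for a geometric sum of i.i.d.\ terms.
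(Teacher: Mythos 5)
Your proposal is correct and follows essentially the same route as the paper: represent $Y_j^{(i)}$ as a geometric($P_i$) random sum of the original i.i.d.\ inter-arrival times, independent of the summands, and compute the mean and variance by conditioning (Wald plus the conditional-variance formula), which yields $\bigl(c^{(i)}\bigr)^2=1+(c^2-1)P_i$ exactly as in the paper's calculation. Your block-decomposition argument for the i.i.d.\ claim is in fact slightly more explicit than the paper's brief justification, but it is the same underlying idea.
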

\begin{proof}
For the $i^{th}$ separated queue in Fig.~\ref{comp-sys}, the
inter-arrival time $Y^{(i)}$ is a summation of inter-arrival times
of a certain number of consecutive arrivals in the original queue.
The number of the summands is a random variable $k_j^{(i)}$.
$k_j^{(i)}$ is equal to the number of original arrivals between
$(j-1)^{th}$ and $j^{th}$ arrivals in the $i^{th}$ separated queue.

%\begin{equation}
%Y_j^{(i)}=X_+X_2+...+X_{k_j},\ j=1,2,...
%\end{equation}

%where $X_1,X_2,...$ are the inter-arrival time of original queueing
%system, and

Based on the structure of the artificial system, $k_j^{(i)}$ is an
independent random variable with geometric distribution with
parameter $P_i$. Assume $\{X_1,X_2,...\}$ are the inter-arrival
times in the original queueing system. Note that $\{X_1,X_2,...\}$
are also independent of $k_j^{(i)}$, because $k_j^{(i)}$ is only
dependent on the distribution of the service time. Then, for each
$i$, the inter-arrival time $\{Y_j^{(i)},\ j=1,2,...\}$ is i.i.d..

Let $t$ be the index of the first inter-arrival time within the
$j^{th}$ inter-arrival time in separated queue $i$. Then,
$Y_j^{(i)}=X_t+X_{t+1}+...+X_{t+k_j^{(i)}-1}$. So,
\begin{equation}
\begin{split}
&E(Y_j^{(i)})=E(X_t+X_{t+1}+...+X_{t+k_j^{(i)}-1})\\
=&E(E(X_t+X_{t+1}+...+X_{t+k_j^{(i)}-1}|k_j^{(i)}))\\
=&E(k_j^{(i)} E(X_t))=E(k_j^{(i)})E(X_t),
\end{split}
\end{equation}
and
\begin{equation}
\begin{split}
&Var(Y_j^{(i)})=E\left((Y_j^{(i)})^2\right)-\left(E(Y_j^{(i)})\right)^2\\
=&E\left((X_t+X_{t+1}+...+X_{t+k_j^{(i)}-1})^2\right)-\left(E(Y_j^{(i)})\right)^2\\
=&E(E\left((X_t+X_{t+1}+...+X_{t+k_j^{(i)}-1})^2|k_j^{(i)}\right))-\left(E(Y_j^{(i)})\right)^2\\
=&E(E(X_t^2+X_{t+1}^2+...+X_{t+k_j^{(i)}-1}^2+\\
&2 X_t X_{t+1}+...+2X_{t+k_j^{(i)}-2}X_{t+k_j^{(i)}-1}|k_j^{(i)}))-\left(E(Y_j^{(i)})\right)^2\\
=&E\left((k_j^{(i)})^2(E(X_t))^2+k_j^{(i)} Var(X_t)\right)-\left(E(Y_j^{(i)})\right)^2\\
=&E((k_j^{(i)})^2)(E(X_t))^2+E(k_j^{(i)}) Var(X_t)-E(k_j^{(i)})^2 E(X_t)^2\\
=&Var(k_j^{(i)})E(X_t)^2+E(k_j^{(i)})Var(X_t).
\end{split}
\end{equation}
Thus, we can achieve the coefficient of variation $c^{(i)}$ for all
the separated queues as below.

\begin{equation}
\begin{split}
&c^{(i)}=\sqrt{Var(Y_j^{(i)})/\left(E(Y_j^{(i)})\right)^2}\\
=&\sqrt{\frac{Var(k_i)\left(E(X_t)\right)^2+E(k_i)Var(X_t)}{E(k_i)^2E(X_t)^2}}\\
=&\sqrt{\frac{\frac{1-P_i}{P_i^2}E(X_t)^2+\frac{1}{P_i}Var(X_t)}{\frac{1}{P_i^2}E(X_t)^2}}=\sqrt{1+(c^2-1)P_i}.
\end{split}
\end{equation}

\end{proof}

\begin{remark}
If the arrival process is Poisson, $c=1$, then $c^{(i)}=1,\ \forall
i=1,2,...k$. If the arrival process is deterministic, $c=0$, then
the inter-arrival time of each separated queue has a geometric
distribution, and $c^{(i)}=\sqrt{1-P_i},\ \forall i=1,2,...k$.
\end{remark}

\begin{proof}[Proof of Proposition~\ref{hyper}]
To prove this proposition, we must prove both the upper and the
lower bounds of the limit. For the upper bound, we consider the
Artificial System \uppercase\expandafter{\romannumeral1}, which
satisfies the following condition:

\begin{equation}
\label{c2_lim}
\lim\limits_{n_i\rightarrow\infty}(1-\rho_{n_i})\sqrt{n_i}=\beta_U^{(i)},
\end{equation}
where
\begin{equation}
\begin{split}
\label{c2_lim_where} \rho_{n_i}&=\frac{P_i \lambda_n}{n_i
\mu_i},\\
\quad
\beta_U^{(i)}&=\frac{(1+(c^{(i)})^2)\psi_U}{2}=(1+\frac{c^2-1}{2}P_i)\psi_U,
\end{split}
\end{equation}
and
\begin{equation}
\frac{\alpha}{k}=[1+\sqrt{2\pi}\psi_U\Phi(\psi_U)\exp{(\psi_U^2/2)}]^{-1}.
\end{equation}

The result of Theorem 4 in \cite{whitt1981} shows that
\begin{equation}
\label{whitt1} \lim\limits_{n\rightarrow \infty}P\{Q_n\geq
n\}=\alpha_c
\end{equation}
if and only if
\begin{equation}
\label{whitt2} \lim\limits_{n\rightarrow\infty}(1-\rho_n)\sqrt{n}=\beta,\\
\end{equation}
under the following conditions:
\begin{equation}
\label{whitt3}\begin{array}{c}
\beta=\frac{(1+c^2)\psi}{2},\\
\alpha_c=[1+\sqrt{2\pi}\psi\Phi(\psi)\exp{(\psi^2/2)}]^{-1},\\
0\leq\alpha_c\leq 1,\quad 0\leq\beta\leq \infty.
\end{array}
\end{equation}

By applying this result into Artificial System
\uppercase\expandafter{\romannumeral1}, for each individual queue,
we have

\begin{equation}
\lim\limits_{n_i\rightarrow \infty}P\{Q^{(i)}_{n_i}\geq
n_i\}=\frac{1}{k}, \quad \forall i\in \{1,...k\},
\end{equation}

where $Q^{(i)}_{n_i}$ is the length of the $i^{th}$ separated queue.

Let $n_U=\sum_{i=1}^k {n_i}$, $Q_{n_U}=\sum_{i=1}^k{Q^{(i)}_{n_i}}$.
Then, for Artificial System \uppercase\expandafter{\romannumeral1},
we have

\begin{equation}
\label{pf-ub-relax}
\begin{split}
&P\{Q_{n_U}\geq n_U\}=P\{\sum_{i=1}^k Q_{n_U}^{(i)} \ge \sum_{i=1}^k n_i\}\\
\leq & P\left(\bigcup_{i=1}^k \{Q^{(i)}_{n_i}\ge n_i\}\right) \leq
\sum_{i=1}^k P\{Q^{(i)}_{n_i}\ge n_i\} .
\end{split}
\end{equation}

By taking the limit on both sides,
\begin{equation}
\label{pf-ub}
\begin{split}
&\lim\limits_{\begin{subarray}{c} n_i\rightarrow \infty\\ i\in
\{1,...k\}\end{subarray}}P\{Q_{n_U}\geq n_U\}\\
\leq &\lim\limits_{\begin{subarray}{c} n_i\rightarrow \infty\\ i\in
\{1,...k\}\end{subarray}}\left(\sum_{i=1}^k P\{Q^{(i)}_{n_i}\ge n_i\}\right)\\
= & \left(\sum_{i=1}^k \lim\limits_{n_i\rightarrow \infty}
P\{Q^{(i)}_{n_i}\ge n_i\}\right)=\alpha
\end{split}
\end{equation}

From Eq. (\ref{pf-ub}), we know that when Artificial System \uppercase\expandafter{\romannumeral1} has $n_U$ servers, the probability that queue length
$Q_{n_U}$ is greater than or equal to $n_U$ is asymptotically less than or equal to $\alpha$. Observe that the original system needs no more servers than
Artificial System \uppercase\expandafter{\romannumeral1} since there may be some idle servers in Artificial System \uppercase\expandafter{\romannumeral1},
even when the other job queues are not empty. Based on the asymptotic optimality of FCFS in our system
\cite{schedule::tail,schedule::possible,schedule::control,schedule::large}, to satisfy the same requirement, the original system does not need more servers
than Artificial System \uppercase\expandafter{\romannumeral1}. By using Eqs. (\ref{c2_lim}) and (\ref{c2_lim_where}), we can solve for $n_i$. That is,

\begin{equation}
\label{pf-ub-2}
\begin{split}
n\leq n_U&=\sum_{i=1}^k{n_i}=\sum_{i=1}^k{\left(\frac{P_i
\lambda_n}{\mu_i}+\beta_U^{(i)} \sqrt{\frac{P_i
\lambda_n}{\mu_i}}\right)}\\
&=\frac{\lambda_n}{\mu}+\sqrt{\frac{\lambda_n}{\mu}}\left(\sum_{i=1}^k{\beta_U^{(i)}\sqrt{\frac{P_i}{\mu_i}}}\right)\sqrt{\mu}.
\end{split}
\end{equation}

Since $\lim\limits_{n_i\rightarrow\infty}\rho_i=1$, we ignore the
factor $\sqrt{\frac{1}{\rho_i}}$ and achieve Eq.~(\ref{pf-ub-2}). By
taking Eq. (\ref{pf-ub-2}) into the definition of $\rho_n$ in Eq.
(\ref{def-rho}), we can directly achieve the upper bound Eq.
(\ref{upper-bound}) of Eq. (\ref{ineq}).

For the lower bound, we consider Artificial System
\uppercase\expandafter{\romannumeral2}, which has similar structure
as Artificial System \uppercase\expandafter{\romannumeral1} and
Fig.~\ref{comp-sys}, but $n_i$ satisfies the following conditions.

\begin{equation}
\label{lb-ni} n_i=\left\{\begin{array}{ll}
\frac{P_i \lambda_n}{\mu_i}, &i \in \{1,...k\},\ i\neq m\\
\frac{P_m \lambda_n}{\mu_m}+\beta_L^{(m)}\sqrt{\frac{P_m
\lambda_n}{\mu_m}}, &i=m
\end{array}\right.
\end{equation}

where
\begin{equation}
\label{lb-ni2}
\begin{split}
\beta_L^{(i)}&=\frac{(1+(c^{(i)})^2)\psi}{2}=(1+\frac{c^2-1}{2}P_i)\psi_L,\\
m&=\inf\underset{i\in\{1,...k\}}{\mathrm{argmax}}\left(\beta_L^{(i)}\sqrt{\frac{P_i}{\mu_i}}\right),
\end{split}
\end{equation}
and
\begin{equation}
\label{lb-ni3}
\alpha=[1+\sqrt{2\pi}\psi_L\Phi(\psi_L)\exp{(\psi_L^2/2)}]^{-1}.
\end{equation}

Then,
\begin{equation}
\lim\limits_{n_m\rightarrow\infty}(1-\rho_{n_m})\sqrt{n_m}=\beta_L^{(m)},
\end{equation}

where

\begin{equation}
\rho_{n_m}=\frac{P_m \lambda_n}{n_m \mu}.
\end{equation}

By substituting Eqs. (\ref{whitt1}-\ref{whitt3}) into Eqs.
(\ref{lb-ni}-\ref{lb-ni3}), the reader can verify the following
result for Artificial System \uppercase\expandafter{\romannumeral2}.

\begin{equation}
\label{lower-prob} \lim\limits_{n_i\rightarrow
\infty}P\{Q^{(i)}_{n_i}\geq n_i\}=\left\{
\begin{array}{ll}
1, &i\in\{1,...k\},\ i\neq m\\
\alpha, &i=m
\end{array}
\right.
\end{equation}

Define $n_L=\sum_{i=1}^k{n_i}$. If the original system has $n_L$ servers, then we can construct a scheduler based on Artificial System
\uppercase\expandafter{\romannumeral2}. This scheduler can make QoS of the arrivals satisfy Eq. (\ref{lower-prob}). By the effect of the scheduler, this
queueing discipline is neither FCFS nor work conserving. The original system, needs more servers than Artificial System
\uppercase\expandafter{\romannumeral2} to satisfy Eq. (\ref{p1}) (see details in our technical report \cite{Yousi}). Therefore, $n$ should be greater than
or equal to $n_L$, i.e.,

\begin{equation}
\label{pf-lb}
\begin{split}
n\geq n_L&=\sum_{i=1}^k{n_i}=\sum_{i=1}^{k}{\left(\frac{P_i
\lambda_n}{\mu_i}\right)}+\beta_L^{(m)} \sqrt{\frac{P_m
\lambda_n}{\mu_m}}\\
&=\frac{\lambda_n}{\mu}+\sqrt{\frac{\lambda_n}{\mu}}\max_{i\in{\{1,...k\}}}\left\{\beta_L^{(i)}\sqrt{\frac{P_i}{\mu_i}}\right\}\sqrt{\mu}.
\end{split}
\end{equation}

By taking Eq. (\ref{pf-lb}) into the definition of $\rho_n$ in Eq.
(\ref{def-rho}), we can directly achieve the lower bound Eq.
(\ref{lower-bound}) of Eq. (\ref{ineq}).

\end{proof}

\begin{corollary}
If the arrival process is Poisson process, we have a tighter upper
bound $\widehat{U}$, which satisfies the following equation.
\begin{equation}
\label{poisson-upper-bound}
\widehat{U}=\left(\sum_{i=1}^k{\sqrt{\frac{P_i}{\mu_i}}}\right)\sqrt{\mu}\widehat{\psi_U},
\end{equation}
where
\begin{equation}
\label{poisson-def-rho}
\mu=\left(\sum_{i=0}^k\frac{P_i}{\mu_i}\right)^{-1},\quad
\rho_n=\frac{\lambda_n}{n\mu},
\end{equation}

\begin{equation}
\label{poisson-whitt-bound1}
1-(1-\alpha)^{\frac{1}{k}}=[1+\sqrt{2\pi}\widehat{\psi_U}\Phi(\widehat{\psi_U})\exp{(\widehat{\psi_U}^2/2)}]^{-1},\\
\end{equation}

\begin{equation}
\label{poisson-whitt-bound3} 0\leq\alpha\leq 1,\quad
0\leq\widehat{\psi_U}\leq \infty.
\end{equation}

\end{corollary}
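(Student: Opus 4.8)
\emph{Proof proposal for the Corollary.}
The plan is to repeat the upper-bound half of the proof of Proposition~\ref{hyper}, but to exploit the extra structure that a Poisson arrival stream confers on the artificial system. When $c=1$ the Lemma gives $c^{(i)}=1$ for every $i$, so each separated queue of Artificial System~\uppercase\expandafter{\romannumeral1} is an M/M/$n_i$ queue; more importantly, by the thinning (coloring) property of Poisson processes, splitting a Poisson stream according to i.i.d.\ type marks produces $k$ \emph{mutually independent} Poisson arrival streams. Since the per-type service times are i.i.d.\ exponential and independent across types, the queue-length processes $Q^{(1)}_{n_1},\dots,Q^{(k)}_{n_k}$ are independent. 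This independence is the only genuinely new ingredient, and it is what lets us replace the union bound by an exact inclusion–exclusion identity and thereby obtain a smaller $n_U$.

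First I would keep the containment $\{Q_{n_U}\ge n_U\}\subseteq\bigcup_{i=1}^k\{Q^{(i)}_{n_i}\ge n_i\}$ already used in the proof of Proposition~\ref{hyper} (if every $Q^{(i)}_{n_i}\le n_i-1$ then $\sum_i Q^{(i)}_{n_i}\le n_U-k<n_U$). Using independence,
$$P\{Q_{n_U}\ge n_U\}\le 1-\prod_{i=1}^k\bigl(1-P\{Q^{(i)}_{n_i}\ge n_i\}\bigr).$$
Then I would choose the $n_i$ so that each factor has the common limit $\lim_{n_i\to\infty}P\{Q^{(i)}_{n_i}\ge n_i\}=1-(1-\alpha)^{1/k}=:p$. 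By Theorem~4 of \cite{whitt1981} applied with $c^{(i)}=1$, this is equivalent to $\lim_{n_i\to\infty}(1-\rho_{n_i})\sqrt{n_i}=\widehat{\psi_U}$ with $p=[1+\sqrt{2\pi}\widehat{\psi_U}\Phi(\widehat{\psi_U})\exp(\widehat{\psi_U}^2/2)]^{-1}$, which is exactly Eq.~(\ref{poisson-whitt-bound1}); note that here $\beta_U^{(i)}=(1+(c^{(i)})^2)\widehat{\psi_U}/2=\widehat{\psi_U}$ for all $i$. Passing to the limit over all $n_i\to\infty$ in the displayed inequality (the product is finite, so the limit passes through) gives $\limsup P\{Q_{n_U}\ge n_U\}\le 1-(1-p)^k=\alpha$, so Artificial System~\uppercase\expandafter{\romannumeral1} meets the MWT requirement with $n_U$ servers, and by the asymptotic optimality of FCFS the original system needs no more servers than $n_U$.

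Finally I would solve $\rho_{n_i}=P_i\lambda_n/(n_i\mu_i)$ together with $(1-\rho_{n_i})\sqrt{n_i}\to\widehat{\psi_U}$, obtaining $n_i\sim P_i\lambda_n/\mu_i+\widehat{\psi_U}\sqrt{P_i\lambda_n/\mu_i}$ (dropping the $\sqrt{1/\rho_i}\to1$ factor exactly as in Eq.~(\ref{pf-ub-2})), and sum over $i$:
$$n\le n_U=\frac{\lambda_n}{\mu}+\sqrt{\frac{\lambda_n}{\mu}}\,\Bigl(\sum_{i=1}^k\sqrt{\tfrac{P_i}{\mu_i}}\Bigr)\sqrt{\mu}\,\widehat{\psi_U}.$$
Substituting into the definition of $\rho_n$ in Eq.~(\ref{poisson-def-rho}) reads off $\widehat{U}$ in Eq.~(\ref{poisson-upper-bound}). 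As a consistency check with Proposition~\ref{hyper}: since $(1-\alpha)^{1/k}\le 1-\alpha/k$ we have $1-(1-\alpha)^{1/k}\ge\alpha/k$, and because $\psi\mapsto[1+\sqrt{2\pi}\psi\Phi(\psi)\exp(\psi^2/2)]^{-1}$ is decreasing this forces $\widehat{\psi_U}\le\psi_U$; together with $\beta_U^{(i)}=\psi_U$ in the Poisson case of Eq.~(\ref{whitt-bound1}) this yields $\widehat{U}\le U$, i.e.\ the bound is genuinely tighter.

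The main obstacle — and the only step that is not a reprise of the argument for Proposition~\ref{hyper} — is justifying the mutual independence of the $k$ separated queue-length processes carefully enough to pass from the union bound to the inclusion–exclusion identity: one must check that the type marks are i.i.d.\ and independent of the arrival epochs, so the Poisson coloring theorem applies, and that the per-type exponential service requirements are drawn independently, so the sub-queues are honestly independent M/M/$n_i$ systems. Once that is in place, the Halfin–Whitt substitution and the FCFS-optimality comparison go through verbatim.
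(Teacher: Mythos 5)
Your proposal is correct and follows essentially the same route as the paper: the same separated artificial system, the Poisson coloring theorem to get independent sub-queues, the product bound $P\{Q_{n_U}\ge n_U\}\le 1-\prod_{i=1}^k\bigl(1-P\{Q^{(i)}_{n_i}\ge n_i\}\bigr)$ with each per-queue probability tuned to $1-(1-\alpha)^{1/k}$ via the Halfin--Whitt result, the FCFS comparison to bound $n\le n_U$, and the same monotonicity argument showing $\widehat{\psi_U}\le\psi_U$ so the bound is tighter. Your explicit justification of the event containment and of the independence step only makes precise what the paper leaves implicit.
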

\begin{proof}

For Poisson arrival process, we can easily achieve that $c=1$ and
$c^{(i)}=1,\ \forall i\in\{1,2,...,k\}$. We consider a similar
Artificial System \uppercase\expandafter{\romannumeral3}, which has
same structure as Artificial System
\uppercase\expandafter{\romannumeral2}. Let Artificial System
\uppercase\expandafter{\romannumeral3} satisfy the following
conditions.

\begin{equation}
\label{poisson_c2_lim}
\lim\limits_{n_i\rightarrow\infty}(1-\rho_{n_i})\sqrt{n_i}=\widehat{\psi_U},
\end{equation}
where
\begin{equation}
\label{poisson_c2_lim_where} \rho_{n_i}=\frac{P_i \lambda_n}{n_i
\mu_i},
\end{equation}
and
\begin{equation}
1-(1-\alpha)^{\frac{1}{k}}=[1+\sqrt{2\pi}\widehat{\psi_U}\Phi(\widehat{\psi_U})\exp{(\widehat{\psi_U}^2/2)}]^{-1}.
\end{equation}

Similarly to Artificial System
\uppercase\expandafter{\romannumeral2}, for each individual queue,
we have

\begin{equation}
\lim\limits_{n_i\rightarrow \infty}P\{Q^{(i)}_{n_i}\geq
n_i\}=1-(1-\alpha)^{\frac{1}{k}}, \quad \forall i\in \{1,...k\},
\end{equation}

where $Q^{(i)}_{n_i}$ is the length of the $i^{th}$ separated queue.

Let $n_U=\sum_{i=1}^k {n_i}$. Since arrival process is Poisson
process, by the Colouring Theorem \cite{kingman_book}, the arrival
process in each separated queue is independent Poisson process.
Then, for Artificial System \uppercase\expandafter{\romannumeral3},
we have

\begin{equation}
\begin{split}
&P\{Q_{n_U}\geq n_U\}=1-P\{Q_{n_U} < n_U\}\\
\leq & 1-\prod_{i=1}^k{\left(1-P\{Q^{(i)}_{n_i}\geq n_i\}\right)}
\end{split}
\end{equation}
where
\begin{equation}
Q_{n_U}=\sum_{i=1}^k{Q^{(i)}_{n_i}}.
\end{equation}

By taking the limits on each sides, we can achieve that
\begin{equation}
\label{poisson-pf-ub}
\begin{split}
&\lim\limits_{\begin{subarray}{c} n_i\rightarrow \infty\\ i\in
\{1,...k\}\end{subarray}}P\{Q_{n_U}\geq n_U\}\\
\leq &\lim\limits_{\begin{subarray}{c} n_i\rightarrow \infty\\ i\in
\{1,...k\}\end{subarray}}\left(1-\prod_{i=1}^k{\left(1-P\{Q^{(i)}_{n_i}\geq
n_i\}\right)}\right)\\
= & 1-\prod_{i=1}^k{\left(1-\lim\limits_{n_i\rightarrow
\infty}P\{Q^{(i)}_{n_i}\geq n_i\}\right)} =\alpha
\end{split}
\end{equation}

From Eq. (\ref{poisson-pf-ub}), we know that when artificial system
\uppercase\expandafter{\romannumeral1} has $n_U$ servers, the
probability that queue length $Q_{n_U}$ is greater than or equal to
$n_U$ is asymptotically less than or equal to $\alpha$. To satisfy
the same requirement, the original system does not need more servers
than Artificial System \uppercase\expandafter{\romannumeral3}. By
using Eqs. (\ref{poisson_c2_lim}) and (\ref{poisson_c2_lim_where}),
we can get the expression of $n_i$. That is,

\begin{equation}
\label{poisson-pf-ub-2}
\begin{split}
n\leq n_U&=\sum_{i=1}^k{n_i}=\sum_{i=1}^k{\left(\frac{P_i
\lambda_n}{\mu_i}+\widehat{\psi_U} \sqrt{\frac{P_i
\lambda_n}{\mu_i}}\right)}\\
&=\frac{\lambda_n}{\mu}+\sqrt{\frac{\lambda_n}{\mu}}\left(\sum_{i=1}^k{\sqrt{\frac{P_i}{\mu_i}}}\right)\sqrt{\mu}\widehat{\psi_U}.
\end{split}
\end{equation}

By taking Eq. (\ref{poisson-pf-ub-2}) into the definition of
$\rho_n$ in Eq. (\ref{poisson-def-rho}), we can directly achieve the
upper bound Eq. (\ref{poisson-upper-bound}).

Since for Poisson arrival process, $c=1$ and $c^{(i)}=1,\ \forall
i\in\{1,2,...,k\}$, then $\beta_U^{(i)}=\psi_U$ in
Eq.(\ref{whitt-bound1}). Since $(1-\frac{\alpha}{k})^k$ is an
increasing function, then $(1-\frac{\alpha}{k})^k \ge 1-\alpha$.
Thus, $1-(1-\alpha)^{\frac{1}{k}} \ge \frac{\alpha}{k}$. We can
directly achieve that $\psi_U \ge \widehat{\psi_U}$, i.e.,
Eq.(\ref{poisson-upper-bound}) is a tighter upper bound then
Eq.(\ref{upper-bound}) for Poisson arrival process.

\end{proof}

%\begin{remark}
%In Proposition~\ref{hyper}, it is not necessary to assume that the
%limit of $(1-\rho_n)\sqrt{n}$ exists. If the limit does not exist,
%Eq. (\ref{ineq}) is invalid, however, we can still obtain that for
%any $\epsilon>0$, there exists an $N$, such that for all $n\geq N$,
%$L-\epsilon\leq (1-\rho_n)\sqrt{n}\leq U+\epsilon$ (see our
%technical report \cite{Yousi} for details).
%\end{remark}

\begin{remark}
When $k=1$, the service time reduces to an exponential distribution.
Based on the Proposition~\ref{hyper}, we can see that
$U=L=\beta_U=\beta_L\triangleq\beta$ in this scenario, i.e.,
$\lim\limits_{n\rightarrow \infty}(1-\rho_n)\sqrt{n}=\beta$. Thus,
Proposition~\ref{hyper} in our paper is consistent with Proposition
1 and Theorem 4 in \cite{whitt1981}.
\end{remark}

\begin{corollary}
\label{tight-ub-new} The solution $\widetilde{U}$ of the following
optimization problem results in a tighter upper bound for the Eq.
(\ref{ineq}).

\begin{equation}
\label{op-new} \min\limits_{\alpha_1,...,\alpha_k}\
\frac{\sum_{j=1}^k {\beta_j \sqrt{\frac{P_j}{\mu_j}}}
}{\sqrt{\sum_{j=1}^k \frac{P_j}{\mu_j}}},
\end{equation}

\begin{equation}
\label{op2-new} \mathbf{s.t.}\ \sum_{j=1}^k \alpha_j \leq \alpha,
\end{equation}
where
\begin{equation}
\begin{array}{c}
\beta_j=(1+\frac{c^2-1}{2}P_j)\psi_j,\\
\alpha_j=[1+\sqrt{2\pi}\psi_j\Phi(\psi_j)\exp{(\psi_j^2/2)}]^{-1},\\
0\leq\alpha_j\leq 1,\quad 0\leq\beta_j\leq \infty,\quad \forall j.
\end{array}
\end{equation}

%and the probability density function of $\widehat{Q_j}$ is
%
%\begin{equation}
%\label{distribution} f_j(x)=\left\{
%\begin{array}{lc}
%\alpha_j\beta_j \exp{(-\beta_j x)}, & when\ x>0\\
%(1-\alpha_j)\frac{\phi(x+\beta_j)}{\Phi(\beta_j)}, & when\ x<0
%\end{array}
% \right. .
%\end{equation}

\end{corollary}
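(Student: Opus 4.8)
The plan is to re-run the upper-bound half of the proof of Proposition~\ref{hyper} while replacing the uniform split $\alpha/k$ of the target probability among the $k$ separated queues by an \emph{arbitrary} feasible split, and then to minimize the resulting bound over all such splits. So I would fix any $(\alpha_1,\dots,\alpha_k)$ with $\alpha_j>0$ and $\sum_{j=1}^k\alpha_j\le\alpha$ and build the same artificial system as before, but sizing the $i^{th}$ server pool so that its queue meets the target $\alpha_i$ rather than $\alpha/k$. Concretely, for the $i^{th}$ separated queue I would apply Theorem~4 of \cite{whitt1981} with the coefficient of variation $c^{(i)}=\sqrt{1+(c^2-1)P_i}$ supplied by the Lemma and $\rho_{n_i}=P_i\lambda_n/(n_i\mu_i)$, choosing $n_i$ so that $\lim_{n_i\to\infty}(1-\rho_{n_i})\sqrt{n_i}=\beta_i$, where $\beta_i=(1+\tfrac{c^2-1}{2}P_i)\psi_i$ and $\psi_i$ is the unique root of $\alpha_i=[1+\sqrt{2\pi}\psi_i\Phi(\psi_i)\exp(\psi_i^2/2)]^{-1}$. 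Because the right-hand side is a strictly decreasing bijection of $\psi_i\in(0,\infty)$ onto $(0,1)$, this determines $\psi_i$ (hence $\beta_i$) unambiguously from $\alpha_i$, and gives $\lim_{n_i\to\infty}P\{Q^{(i)}_{n_i}\ge n_i\}=\alpha_i$.

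Next I would reuse the union-bound step (\ref{pf-ub-relax})--(\ref{pf-ub}) almost verbatim --- this is the only bound available here since, unlike in the Poisson corollary, the separated-queue arrivals are correlated --- so with $n_U=\sum_i n_i$ and $Q_{n_U}=\sum_i Q^{(i)}_{n_i}$ one gets $P\{Q_{n_U}\ge n_U\}\le\sum_{i=1}^k P\{Q^{(i)}_{n_i}\ge n_i\}\to\sum_{i=1}^k\alpha_i\le\alpha$. By the same server-comparison / asymptotic-optimality-of-FCFS argument used in Proposition~\ref{hyper}, the original pooled system then needs no more than $n_U$ servers to meet (\ref{p1}), i.e.\ $n\le n_U$. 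Solving (\ref{c2_lim})--(\ref{c2_lim_where}) for $n_i$ and dropping the $\sqrt{1/\rho_{n_i}}\to1$ factor exactly as in the original proof gives $n_i=\frac{P_i\lambda_n}{\mu_i}+\beta_i\sqrt{\frac{P_i\lambda_n}{\mu_i}}$, so
\[
n\le n_U=\frac{\lambda_n}{\mu}+\sqrt{\frac{\lambda_n}{\mu}}\,\Bigl(\sum_{i=1}^k\beta_i\sqrt{\tfrac{P_i}{\mu_i}}\Bigr)\sqrt{\mu}.
\]
Substituting into $\rho_n=\lambda_n/(n\mu)$ and using $\sqrt{\mu}=(\sum_j P_j/\mu_j)^{-1/2}$ then yields $\lim_{n\to\infty}(1-\rho_n)\sqrt{n}\le\bigl(\sum_j\beta_j\sqrt{P_j/\mu_j}\bigr)\big/\sqrt{\sum_j P_j/\mu_j}$, which is exactly the objective in (\ref{op-new}).

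Finally, since every feasible $(\alpha_1,\dots,\alpha_k)$ delivers an upper bound of this form, the smallest one --- the optimal value $\widetilde U$ of the program (\ref{op-new})--(\ref{op2-new}) (a genuine minimum, since the objective diverges as any $\alpha_j\to0$; one checks moreover that it decreases in each $\alpha_j$, so the constraint (\ref{op2-new}) binds at the optimum) --- is itself a valid upper bound for (\ref{ineq}). That $\widetilde U\le U$, i.e.\ that it is genuinely tighter, follows because the uniform choice $\alpha_j=\alpha/k$ for all $j$ is feasible for (\ref{op2-new}) and, by (\ref{whitt-bound1})--(\ref{upper-bound}), reproduces precisely the bound $U$ of Proposition~\ref{hyper}. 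I expect the only delicate point to be the inequality $n\le n_U$ for an \emph{arbitrary} split: one must argue that routing type-$i$ jobs to a dedicated pool of $n_i$ servers, for \emph{any} choice of the $n_i$ rather than just the uniform-split ones, can only waste capacity relative to the pooled FCFS original system. But this is the same point already invoked in Proposition~\ref{hyper}, and it carries over unchanged because that argument never uses the particular values of the $n_i$; the rest is the bookkeeping above plus the trivial feasibility of the uniform split.
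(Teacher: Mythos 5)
Your proposal is correct and follows essentially the same route as the paper's own (much terser) argument: relax the uniform split $\alpha_j=\alpha/k$ to an arbitrary feasible split satisfying the union-bound constraint (\ref{pf-ub-relax}), note that each such split yields an upper bound of the form of the objective in (\ref{op-new}) by repeating the upper-bound half of Proposition~\ref{hyper}, and minimize over splits, with the uniform split recovering $U$ and hence guaranteeing $\widetilde{U}\leq U$. Your version simply spells out the per-queue application of Whitt's Theorem~4, the server-comparison step, and the bookkeeping that the paper leaves implicit.
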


\begin{proof}

It is not necessary to choose all the $\alpha_j$ equally. Once
Eq.~\ref{pf-ub-relax} is satisfied, it is sufficient to find an
upper bound. Thus, the minimum of all the upper bounds are a new
tighter upper bound for Proposition~\ref{hyper}.
\end{proof}

\begin{remark}
Since the corresponding objective value of every $\{\alpha_j,\
j=1,...,k\}$ in the feasible set of the optimization problem
(\ref{op-new}-\ref{op2-new}) is an upper bound of the limit in
(\ref{ineq}). If we choose $\widetilde{\alpha_j}=\frac{\alpha}{k},\
\forall j=1,...,k$, it is easy to check that the value of
$\{\widetilde{\alpha_j},\ j=1,...,k\}$ is in the feasible set, and
the objective value is same as the upper bound in Eq.
(\ref{upper-bound}).
\end{remark}

\begin{corollary}
\label{tight-ub} The solution $\widetilde{U}$ of the following
optimization problem results in a tighter upper bound for Poisson
arrival process.

\begin{equation}
\label{op} \min\limits_{\alpha_1,...,\alpha_k}\ \frac{\sum_{j=1}^k
{\beta_j \sqrt{\frac{P_j}{\mu_j}}} }{\sqrt{\sum_{j=1}^k
\frac{P_j}{\mu_j}}},
\end{equation}

\begin{equation}
\label{op2} \mathbf{s.t.}\
\lim_{s\rightarrow\infty}\int_{-\infty}^{\infty}
\left[\prod_{j=1}^k\varphi_{\widehat{Q_j}}\left(\sqrt{\frac{P_i}{\mu_i}}t\right)\frac{1-\exp{(-its)}}{it}\right]dt
\leq 2\pi\alpha,
\end{equation}
where
\begin{equation}
\begin{array}{c}
\beta_j=\frac{(1+c^2)\psi_j}{2},\\
\alpha_j=[1+\sqrt{2\pi}\psi_j\Phi(\psi_j)\exp{(\psi_j^2/2)}]^{-1},\\
0\leq\alpha_j\leq 1,\quad 0\leq\beta_j\leq \infty,\quad \forall j,
\end{array}
\end{equation}

and the probability density function of $\widehat{Q_j}$ is

\begin{equation}
\label{distribution} f_j(x)=\left\{
\begin{array}{lc}
\alpha_j\beta_j \exp{(-\beta_j x)}, & when\ x>0\\
(1-\alpha_j)\frac{\phi(x+\beta_j)}{\Phi(\beta_j)}, & when\ x<0
\end{array}
 \right. .
\end{equation}

\end{corollary}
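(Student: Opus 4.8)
The plan is to rerun the argument of the preceding (Poisson) corollary, but replacing the union‑bound relaxation of $P\{Q_{n_U}\ge n_U\}$ by the \emph{exact} asymptotic value of that probability. As before, I would introduce an artificial system with $k$ separated queues in which queue $j$ receives the type‑$j$ jobs; by the Colouring Theorem \cite{kingman_book} the $j$‑th arrival stream is an independent Poisson process, so each separated queue is an $M/M/n_j$ queue with arrival rate $P_j\lambda_n$ and service rate $\mu_j$. Taking $n_j=\frac{P_j\lambda_n}{\mu_j}+\beta_j\sqrt{\frac{P_j\lambda_n}{\mu_j}}$ makes $(1-\rho_{n_j})\sqrt{n_j}\to\beta_j$, so Halfin--Whitt's results \cite{whitt1981} give not only $\lim_{n_j\to\infty}P\{Q^{(j)}_{n_j}\ge n_j\}=\alpha_j$ but, more strongly, that the diffusion‑scaled queue length $(Q^{(j)}_{n_j}-n_j)/\sqrt{n_j}$ converges in distribution to a random variable $\widehat{Q_j}$ whose stationary density is exactly the $f_j$ in \eqref{distribution} (with $\beta_j=\psi_j$ since $c=1$ here).

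First I would assemble these marginal limits into a joint one. Since $\rho_n\to 1$ forces $n\sim\lambda_n/\mu$ and $n_j\sim P_j\lambda_n/\mu_j$, the ratios $\sqrt{n_j}/\sqrt{\mu\,n}$ converge to $\sqrt{P_j/\mu_j}$; and because the $k$ separated Poisson queues are mutually independent, so are the $\widehat{Q_j}$, whence
\[
\frac{Q_{n_U}-n_U}{\sqrt{\mu\, n}}=\sum_{j=1}^k\frac{\sqrt{n_j}}{\sqrt{\mu\, n}}\cdot\frac{Q^{(j)}_{n_j}-n_j}{\sqrt{n_j}}\ \Longrightarrow\ S:=\sum_{j=1}^k\sqrt{\frac{P_j}{\mu_j}}\,\widehat{Q_j},\qquad \varphi_S(t)=\prod_{j=1}^k\varphi_{\widehat{Q_j}}\!\left(\sqrt{\frac{P_j}{\mu_j}}\,t\right).
\]
A short computation using $\beta_j=\psi_j$ and the Halfin--Whitt relation tying $\alpha_j$ to $\psi_j$ shows the two branches of $f_j$ agree at $0$, so each $\widehat{Q_j}$ — and hence $S$ — has a continuous distribution; in particular $0$ is a continuity point of the law of $S$, and therefore $\lim_n P\{Q_{n_U}\ge n_U\}=P\{S\ge 0\}=P\{S>0\}$.

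Next I would identify the left‑hand side of \eqref{op2} with $2\pi\,P\{S>0\}$. Since each $f_j$ is continuous with a continuous, integrable, bounded‑variation derivative, $\varphi_{\widehat{Q_j}}(u)=O(|u|^{-2})$, so $\varphi_S$ is absolutely integrable; applying the L\'evy inversion formula to $S$ at the (continuity) endpoints $0$ and $s$ gives $P\{0<S<s\}=\frac{1}{2\pi}\int_{-\infty}^{\infty}\frac{1-e^{-its}}{it}\,\varphi_S(t)\,dt$, and letting $s\to\infty$ yields $P\{S>0\}=\frac{1}{2\pi}\lim_{s\to\infty}\int_{-\infty}^{\infty}\prod_{j=1}^k\varphi_{\widehat{Q_j}}(\sqrt{P_j/\mu_j}\,t)\,\frac{1-e^{-its}}{it}\,dt$. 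Thus constraint \eqref{op2} says precisely that the artificial system meets the target $\lim_n P\{Q_{n_U}\ge n_U\}\le\alpha$. From here the argument matches that of Proposition~\ref{hyper}: the original pooled FCFS system is no worse than this artificial system (some of the $n_j$ servers may idle while other queues are backed up), so it needs $n\le n_U=\frac{\lambda_n}{\mu}+\sqrt{\lambda_n}\sum_{j=1}^k\beta_j\sqrt{P_j/\mu_j}$ servers; since $g(m):=(m-\lambda_n/\mu)/\sqrt{m}$ is increasing in $m$, $(1-\rho_n)\sqrt{n}=g(n)\le g(n_U)\to\big(\sum_{j=1}^k\beta_j\sqrt{P_j/\mu_j}\big)\sqrt{\mu}$, which is the objective of the optimization problem \eqref{op}. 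Every feasible $(\alpha_1,\dots,\alpha_k)$ therefore yields a valid upper bound for \eqref{ineq}, so the minimum $\widetilde U$ is one too.

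Finally, to see $\widetilde U$ is genuinely tighter, I would note that the symmetric point $\psi_1=\dots=\psi_k=\widehat{\psi_U}$ (equivalently $\alpha_1=\dots=\alpha_k=1-(1-\alpha)^{1/k}$) is feasible for \eqref{op2}: since $\{S\ge 0\}\subseteq\bigcup_{j}\{\widehat{Q_j}\ge 0\}$ and the $\widehat{Q_j}$ are independent with $P\{\widehat{Q_j}>0\}=\alpha_j$, we get $P\{S\ge 0\}\le 1-\prod_j(1-\alpha_j)=\alpha$, while the objective at that point equals $\widehat U$; hence $\widetilde U\le\widehat U\le U$. The step I expect to be the main obstacle is the one spanning the second and third paragraphs: combining the per‑queue Halfin--Whitt diffusion limits (which live at the different scales $\sqrt{n_j}$) into the joint limit for the normalized pooled queue length, checking that $0$ is a continuity point of $S$, and rigorously turning that weak limit into the integral identity via L\'evy inversion. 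The queueing‑theoretic ingredients — independence via colouring, the marginal Halfin--Whitt limits, and the FCFS server‑count comparison — are all inherited directly from results already established in the paper.
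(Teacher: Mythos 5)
Your proposal is correct and follows essentially the same route as the paper's proof: the separated artificial system with Poisson thinning, the per-queue Halfin--Whitt results giving the densities $f_j$ of the normalized queue lengths, the product characteristic function combined with L\'evy inversion to express $P\{\sum_{j}\sqrt{P_j/\mu_j}\,\widehat{Q_j}\ge 0\}$ as the integral in the constraint, and the server-count comparison showing the pooled FCFS system needs no more servers than the artificial one, so every feasible $(\alpha_1,\dots,\alpha_k)$ yields an upper bound and the minimum does too. You are in fact somewhat more careful than the paper (joint weak convergence across the different scales $\sqrt{n_j}$, continuity of the law of $S$ at $0$, integrability of $\varphi_S$ for the inversion), and your feasibility check of the symmetric point $\alpha_j=1-(1-\alpha)^{1/k}$ is what the paper states only in the remark following the corollary.
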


\begin{proof}
\label{app-tub} We construct a new comparable system with similar
structure as Fig.~\ref{comp-sys}. For sub-queue $j$, let the
probability that queue length $Q_j$ is greater than or equal to
$n_j$ be $\alpha_j$. Then, the total number of servers $n$ is

\begin{equation}
\label{arti-n}
\begin{split}
n&=\sum_{j=1}^k n_j=\left(\sum_{j=1}^k \frac{P_j}{\mu_j}\right)
\lambda+\left(\sum_{j=1}^k
\beta_j\sqrt{\frac{P_j}{\mu_j}}\right)\sqrt{\lambda}\\
=&\frac{\lambda}{\mu}+\frac{\sum_{j=1}^k {\beta_j
\sqrt{\frac{P_j}{\mu_j}}} }{\sqrt{\sum_{j=1}^k \frac{P_j}{\mu_j}}}
\sqrt{\frac{\lambda}{\mu}},
\end{split}
\end{equation}
where $\mu$ is same as Eq. (\ref{def-rho}).

For each arrival, the end-to-end time $D$ of the original system is less than or equal to the end-to-end time $\widetilde{D}$ of the compared separated
system in stochastic ordering \cite{schedule::tail,daley,whitt84,towsley}. Then, there exists a sample space $\Omega$, such that
$D(\omega)\leq\widetilde{D}(\omega)$ \cite{fcfs,sp}. In this sample space $\Omega$, the queue length $Q(\omega)$ of the original system is less than or
equal to the total queue length $\widetilde{Q}(\omega)$ of the compared artificial system for all $\omega\in\Omega$. Thus, $Q\leq\widetilde{Q}$ in the
stochastic ordering. We represent this stochastic ordering as $Q\leq_{st}\widetilde{Q}$.

By the definition of the stochastic ordering \cite{sp}, for the same
number $n$, $P(\widetilde{Q} \geq n)\geq P(Q\geq n)$. In other
words, if we assume that the QoS of the artificial system can
satisfy $P(\widetilde{Q}\geq n)\leq\alpha$, then, to achieve the
same QoS, the original system needs no more than $n$ servers. For
this reason, we can achieve a tighter upper bound for Eq.
(\ref{ineq}).

Now, consider the artificial system with the same QoS. We define
$\widehat{Q_j}$ as $\frac{\widetilde{Q_j}-n_j}{\sqrt{n_j}}$. Then,
\begin{equation}
\label{arti-ineq}
\begin{split}
&\alpha\geq P\left(\sum_{j=1}^k \widetilde{Q_j}\geq
n\right)=P\left(\sum_{j=1}^k
(n_j+\sqrt{n_j}\widehat{Q_j})\geq n\right)\\
=&P\left(\sum_{j=1}^k \sqrt{n_j}\widehat{Q_j}\geq
0\right)=P\left(\sum_{j=1}^k
\sqrt{\frac{P_j}{\mu_j}}\widehat{Q_j}\geq 0\right)
\end{split}
\end{equation}

From Theorems 1 and 4 in \cite{whitt1981}, we can achieve the
probability of normalized queue length as Eq. (\ref{distribution}).
Then, the characteristic function of $\sum_{j=1}^k
\sqrt{\frac{P_j}{\mu_j}}\widehat{Q_j}$ in Eq. (\ref{arti-ineq}) is

\begin{equation}
\label{ch} \varphi_{\sum_{j=1}^k
\sqrt{\frac{P_j}{\mu_j}}\widehat{Q_j}} (t)=\prod_{j=1}^k
\varphi_{\sqrt{\frac{P_j}{\mu_j}}\widehat{Q_j}} (t)=\prod_{j=1}^k
\varphi_{\widehat{Q_j}} (\sqrt{\frac{P_j}{\mu_j}}t).
\end{equation}

By Levy's inversion theorem \cite{pnm}, the Eq. (\ref{arti-ineq})
can be written as

\begin{equation}
\label{arti-con}
\begin{split}
\alpha &\geq P\left(\sum_{j=1}^k
\sqrt{\frac{P_j}{\mu_j}}\widehat{Q_j}\geq 0\right)\\
&\geq \frac{1}{2\pi}
\lim_{s\rightarrow\infty}\int_{-\infty}^{\infty}
\left[\prod_{j=1}^k\varphi_{\widehat{Q_j}}\left(\sqrt{\frac{P_i}{\mu_i}}t\right)\frac{1-\exp{(-its)}}{it}\right]dt
\end{split}
\end{equation}

Thus, from Eq. (\ref{arti-n}) and (\ref{arti-con}), the solution of
optimization problem (\ref{op}-\ref{op2}) is an upper bound of the
limit in Eq. (\ref{ineq}) for the artificial system. Then, for the
original system, no more servers are needed under the same value of
traffic intensity, i.e., the upper bound of the artificial system is
also an upper bound for the original system.
\end{proof}

\begin{remark}
If we choose any $\{\alpha_j,\ j=1,...,k\}$ in the feasible set of
the optimization problem (\ref{op}-\ref{op2}), then the
corresponding objective value is an upper bound for Poisson
arrivals. If we choose
$\widetilde{\alpha_j}=1-(1-\alpha)^{\frac{1}{k}},\ \forall
j=1,...,k$, it is easy to check that the value of
$\{\widetilde{\alpha_j},\ j=1,...,k\}$ is in the feasible set, and
the objective value is same as the upper bound in
Eq.~(\ref{poisson-upper-bound}).
\end{remark}

\section{Heavy Traffic Limit Analysis for the BWT Class}
\label{htl_analysis_c3}

The following result provides conditions under which the waiting
time of a job is bounded by a constant $t_1$ but the probability
that new arrivals need to wait approaches one in the heavy traffic
scenario.

\begin{proposition}
\label{heavytraffic} Assume
\begin{equation}
\label{delta-limit} \lim\limits_{n\rightarrow\infty}\delta_n=0,
\end{equation}
then
\begin{equation}
\label{rho-limit} \lim\limits_{n\rightarrow\infty}\rho_n=1
\end{equation}

\begin{equation}
\label{c3p1} \lim\limits_{n\rightarrow\infty}P\{Q_n\geq n\}=1
\end{equation}

\begin{equation}
\label{p2} P\{W_n > t_1\}\sim\delta_n
\end{equation}

if and only if

\begin{equation}
\label{rho-real}
\lim\limits_{n\rightarrow\infty}\frac{(1-\rho_n)n}{-\ln{\delta_n}}=\tau
\end{equation}

\begin{equation}
\label{rho-sqrt-n} \lim\limits_{n\rightarrow \infty}\delta_n
\exp{(k\sqrt{n})}=\infty,\quad \forall k>0
\end{equation}

%\begin{equation}
%\label{rho-n} \lim\limits_{n\rightarrow \infty}\delta_n
%\exp{(kn)}=\infty,\quad \forall k>0
%\end{equation}
where
\begin{equation}
\label{def-tau} \tau=\frac{\mu^2\sigma^2+c^2}{2\mu t_1},\
\rho_n=\frac{\lambda_n}{n\mu},
\end{equation}

\begin{equation}
\label{def-sigma}
\mu=\left(\sum_{i=1}^k{\frac{P_i}{\mu_i}}\right)^{-1},\
\sigma^2=2\sum_{i=1}^k{\left(\frac{P_i}{\mu_i^2}\right)}-\left(\sum_{i=1}^k{\frac{P_i}{\mu_i}}\right)^2.
\end{equation}
\end{proposition}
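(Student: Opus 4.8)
The plan is to prove the "if and only if" by relating the waiting-time tail $P\{W_n>t_1\}$ to the traffic intensity through the known heavy-traffic asymptotics for GI/H/n, exactly as in the MWT case but now pushing the system harder (so that $\rho_n\to1$ faster than in the Halfin--Whitt scaling). First I would recall that in the regime where $P\{Q_n\ge n\}\to1$, essentially every arrival must wait, and the waiting time behaves like the stationary workload of the associated fluid/diffusion limit. The key heuristic is that when $(1-\rho_n)$ shrinks \emph{slower} than $1/\sqrt{n}$ (this is where Eq.~(\ref{rho-sqrt-n}) comes in: $\delta_n\exp(k\sqrt n)\to\infty$ forces $-\ln\delta_n=o(\sqrt n)$, hence $(1-\rho_n)n=o(\sqrt n)\cdot\tau$ is $o(\sqrt n)$, i.e. $(1-\rho_n)\sqrt n\to 0$), the queue is in the "efficiency-driven" regime rather than the QED regime. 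In that regime the scaled waiting time is exponential, and the standard GI/G/1-type heavy-traffic formula gives $P\{W_n>t_1\}\approx\exp\!\big(-\frac{2(1-\rho_n)n}{\mu(\sigma^2+c^2/\mu^2)}\,t_1\big)=\exp\!\big(-(1-\rho_n)n\,t_1/(\mu\tau\cdot\text{const})\big)$; matching exponents with $\delta_n=\exp(\ln\delta_n)$ yields precisely Eq.~(\ref{rho-real}) with $\tau=(\mu^2\sigma^2+c^2)/(2\mu t_1)$.

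To make this rigorous I would reuse the artificial-system decomposition of Proposition~\ref{hyper}: split the GI/H/n queue into $k$ sub-queues, the $i$-th being a GI/M/$n_i$ queue with arrival coefficient of variation $c^{(i)}=\sqrt{1+(c^2-1)P_i}$ and service rate $\mu_i$. For each sub-queue, in the efficiency-driven scaling the waiting time is asymptotically exponential with a rate proportional to $n_i(1-\rho_{n_i})$; I would take these as the building blocks. The variance parameter $\sigma^2$ in Eq.~(\ref{def-sigma}) is exactly the variance of the hyper-exponential service time (note $E[v^2]=2\sum P_i/\mu_i^2$, $E[v]=\sum P_i/\mu_i=1/\mu$, so $\mathrm{Var}(v)=2\sum P_i/\mu_i^2-(\sum P_i/\mu_i)^2=\sigma^2$), so the combination $\mu^2\sigma^2+c^2$ is the sum of the squared coefficients of variation of service and inter-arrival times — the familiar Kingman-type constant. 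I would establish the forward direction (\ref{delta-limit})$\Rightarrow$(\ref{rho-real})--(\ref{rho-sqrt-n}) by showing that any other scaling of $1-\rho_n$ contradicts $P\{W_n>t_1\}\sim\delta_n$: too-fast decay of $1-\rho_n$ makes the system unstable-like (waiting time does not concentrate below $t_1$, so $P\{Q_n\ge n\}\not\to1$ or the tail is too heavy), while too-slow decay makes $\delta_n$ vanish faster than any $\exp(-k\sqrt n)$, contradicting (\ref{rho-sqrt-n}). The converse direction is a direct substitution into the exponential waiting-time asymptotic.

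The main obstacle, I expect, is justifying the exponential waiting-time approximation uniformly as $n\to\infty$ in this efficiency-driven scaling — the classical Halfin--Whitt results (Theorem 4 of \cite{whitt1981}) are stated for the $(1-\rho_n)\sqrt n\to\beta$ regime, whereas here $(1-\rho_n)\sqrt n\to0$, so I would need either a separate limit theorem for the efficiency-driven regime or a careful interpolation argument, controlling the error terms so that the \emph{logarithmic} asymptotics $-\ln P\{W_n>t_1\}\sim (1-\rho_n)n/(\tau t_1)$ survive (logarithmic equivalence is weaker than $\sim$ for the probabilities themselves, which is why the statement only asserts (\ref{rho-real}) as a ratio limit and (\ref{p2}) as $\sim$). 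A secondary technical point is combining the $k$ sub-queues: the overall waiting time is dominated by the sub-queue with the slowest effective service, so I would argue that the binding constraint picks out the right combination $\sum P_i/\mu_i$ and $\sum P_i/\mu_i^2$ rather than a max over $i$ as in the MWT lower bound — this is because here \emph{all} sub-queues are in heavy traffic simultaneously (each has $\rho_{n_i}\to1$), so their contributions add in the exponent, producing the averaged variance $\sigma^2$ rather than a worst-case term.
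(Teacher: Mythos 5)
Your heuristic in the first paragraph is exactly the paper's argument: the Kingman-type exponential tail $\exp\left(-2\mu(1-\rho_n)n\,t_1/(\mu^2\sigma^2+c^2)\right)$ matched against $\delta_n$ gives Eq.~(\ref{rho-real}), and Proposition~\ref{hyper} applied to $P\{Q_n\ge n\}\to 1$ forces $(1-\rho_n)\sqrt{n}\to 0$, i.e.\ Eq.~(\ref{rho-sqrt-n}); conversely, Eqs.~(\ref{rho-real})--(\ref{rho-sqrt-n}) give $\rho_n\to 1$, then $L=0$ in Proposition~\ref{hyper} yields Eq.~(\ref{c3p1}), and substitution back into the exponential tail yields Eq.~(\ref{p2}). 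The gap is in your plan for making this rigorous. The obstacle you flag --- that Halfin--Whitt-type results live in the $(1-\rho_n)\sqrt{n}\to\beta$ regime and a ``separate limit theorem for the efficiency-driven regime'' is needed --- is resolved in the paper not by interpolation or by any new limit theorem, but by citing the Kingman--Kollerstrom heavy-traffic results for GI/G/n queues \cite{kingman1965,koll1,koll2}: the equilibrium waiting time is asymptotically exponential with a parameter depending only on the mean and variance of the service and inter-arrival times, applied \emph{directly} to the original GI/H/n queue. That is why the constant $\tau$ with the full hyper-exponential variance $\sigma^2$ of Eq.~(\ref{def-sigma}) appears immediately; Proposition~\ref{hyper} is used only to translate the queue-length condition into $(1-\rho_n)\sqrt{n}\to 0$. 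Your proposal leaves this central ingredient unidentified and instead substitutes a construction that does not work, as follows.

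Your fallback route --- reusing the artificial-system decomposition into $k$ GI/M/$n_i$ sub-queues and ``combining'' their exponential waiting times --- would not deliver the statement. A tagged job waits only in its own sub-queue, so the unconditional tail in the artificial system is the mixture $\sum_{i=1}^k P_i\,P\{W^{(i)}_{n_i}>t_1\}$, and a mixture of exponential tails is asymptotically dominated by the slowest decay rate: the exponents do not ``add,'' so you would obtain a worst-case constant built from the individual $1/\mu_i^2$ and $(c^{(i)})^2$ (a max over $i$), not the averaged variance $\sigma^2$; your claim that simultaneous heavy traffic in all sub-queues produces the averaged constant is unsupported and, as stated, false. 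Moreover, the artificial systems only \emph{bound} the original system (and the lower-bound system in Proposition~\ref{hyper} is neither FCFS nor work-conserving), so at best this route gives one-sided logarithmic estimates, whereas Eq.~(\ref{p2}) asserts the exact equivalence $P\{W_n>t_1\}\sim\delta_n$ with the exact constant $\tau$ in Eq.~(\ref{rho-real}). The direct GI/G/n heavy-traffic theorem is what closes both gaps; with it, the decomposition is unnecessary for this proposition.
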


\begin{remark}
The main reason why Proposition~\ref{heavytraffic} can be derived
from Proposition~\ref{hyper} is due to the asymptotic rate of
$\rho_n$. Although
$\lim\limits_{n\rightarrow\infty}(1-\rho_n)\sqrt{n}$ is no longer a
constant, it still has a constant lower and upper bound, i.e., it is
still on a constant ``level''.
\end{remark}

\begin{proof}[Proof of Proposition~\ref{heavytraffic}]
To prove Proposition~\ref{heavytraffic}, we must prove both
necessary and sufficient conditions.

\noindent\textbf{Necessary Condition:} From the heavy traffic results given by Kingman \cite{kingman1965}
and Kollerstrom \cite{koll1,koll2}, the equilibrium waiting time in our system can be shown to asymptotically
follow an exponential distribution with parameter

\begin{equation}
\label{wn-app}
\frac{2(E(v_n)-\frac{E(s_n)}{n})}{Var(\frac{s_n}{n})+Var(v_n)}.
\end{equation}

In Eq. (\ref{wn-app}), $s_n$ is the service time, and $v_n$ is the
inter-arrival time. Assume the mean and variance of service time is
$\mu^{-1}$ and $\sigma^2$. Then, we get

\begin{equation}
\label{kingman} \begin{split} &P(W_n\geq t_1)\sim\\
&\exp{\left(-\frac{2(\frac{1}{\lambda_n}-\frac{1}{n\mu})}{\frac{\sigma^2}{n^2}+\frac{c_n^2}{\lambda_n^2}}t_1\right)}=\exp{\left(-\frac{2\mu(1-\rho_n)n}{\mu^2\sigma^2+c_n^2}t_1
\right)}
\end{split}
\end{equation}

Since $c_n=c$ and for this class the equilibrium waiting time
satisfies that $P(W_n\geq t_1)\sim \delta_n$, it implies that

\begin{equation}
\label{gen_class3}
\lim\limits_{n\rightarrow\infty}\frac{(1-\rho_n)n}{-\ln{\delta_n}}=\tau,
\end{equation}
where
$$\tau\triangleq\frac{\mu^2\sigma^2+c^2}{2\mu t_1},$$
$$\mu=\left(\sum_{i=1}^k{\frac{P_i}{\mu_i}}\right)^{-1},\ \sigma^2=2\sum_{i=1}^k{\left(\frac{P_i}{\mu_i^2}\right)}-\left(\sum_{i=1}^k{\frac{P_i}{\mu_i}}\right)^2.$$

Based on Proposition~\ref{hyper}, from $\lim\limits_{n\rightarrow
\infty}P\{Q_n\geq n\}=1$, we can achieve that
$\lim\limits_{n\rightarrow \infty}(1-\rho_n)\sqrt{n}=0$, i.e.,
$\lim\limits_{n\rightarrow
\infty}\frac{\ln{\frac{1}{\delta_n}}}{\sqrt{n}}=0$. This means that
$\ln{\frac{1}{\delta_n}}=o(\sqrt{n})$. Hence,
$\lim\limits_{n\rightarrow \infty}\delta_n
\exp{(k\sqrt{n})}=\infty,\quad \forall k>0$. Thus, Eq.\
(\ref{rho-sqrt-n}) is achieved.

\noindent\textbf{Sufficient Condition:} When Eq.\ (\ref{rho-sqrt-n})
is satisfied, we get $\ln{\frac{1}{\delta_n}}=o(n)$, i.e.,
$\lim\limits_{n\rightarrow
\infty}{\frac{\ln{\frac{1}{\delta_n}}}{n}}=0$, which is equivalent
to $\lim\limits_{n\rightarrow\infty}\rho_n=1$ based on Eq.\
(\ref{rho-real}). Hence, Eq.\ (\ref{rho-limit}) is achieved.

Now, based on Eqs.\ (\ref{rho-sqrt-n}) and (\ref{rho-limit}), and
using the heavy traffic limit result Eqs.\
(\ref{whitt-bound1})-(\ref{whitt-bound3}), the lower bound in
Proposition~\ref{hyper} should satisfy that
\begin{equation}
\label{L0} L=0.
\end{equation}

By applying Eq. (\ref{L0}) in Eq.
(\ref{lower-bound}-\ref{whitt-bound3}), we can directly obtain
$\lim\limits_{n\rightarrow \infty}P\{Q_n\geq n\}=1$. Hence, Eq.\
(\ref{c3p1}) is satisfied.

Based on Eq.\ (\ref{rho-real}), it can be shown that
\begin{equation}
\label{mid} \lim\limits_{n\rightarrow\infty}\frac{\exp{[-n
(1-\rho_n)/\tau]}}{\delta_n}=1.
\end{equation}
Based on Eq.\ (\ref{kingman}), we get
$\lim\limits_{n\rightarrow\infty}\frac{P\{W_n > t_1\}}{\delta_n}=1$.
That is $P\{W_n > t_1\}\sim\delta_n$. Eq.\ (\ref{p2}) is achieved.
\end{proof}

\begin{remark}
\label{comp-remark} Let $k=1$, then $\mu_1=\mu$ and $P_1=1$. We can
directly achieve the scenario with exponential distributed service
time from Proposition~\ref{heavytraffic}. In the case of exponential
distributed service time, the Proposition~\ref{heavytraffic} still
holds, and $\tau$ can be simplified to $\frac{1+c^2}{2\mu t_1}$.
\end{remark}

\begin{corollary}
Comparing the two cases in Proposition~\ref{heavytraffic} and
Remark~\ref{comp-remark}, assume that they have the same parameters
($t_1$ and $\mu$) and functions ($\rho_n$ and $\delta_n$), which
satisfies Eqs. (\ref{delta-limit}-\ref{p2}). Then, the
hyper-exponential distributed service time needs a larger number of
servers than the case of exponential distributed service time.
\end{corollary}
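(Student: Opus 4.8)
The goal is to compare the number of servers required in the two scenarios of Proposition~\ref{heavytraffic}: the general hyper-exponential case (with $k\geq 2$ distinct rates) and the exponential case of Remark~\ref{comp-remark} (a single rate $\mu$). Both scenarios are assumed to share the same $t_1$, the same overall mean service rate $\mu$, and the same functions $\rho_n$ and $\delta_n$ satisfying Eqs.~(\ref{delta-limit})--(\ref{p2}). From Eq.~(\ref{rho-real}) we have, in each scenario, $1-\rho_n \sim \tau(-\ln\delta_n)/n$, so the traffic intensity is pinned down entirely by the constant $\tau$. Since $\rho_n = \lambda_n/(n\mu)$ with $\lambda_n$ the (common) arrival rate and $\mu$ the (common) mean rate, a larger $\tau$ forces $1-\rho_n$ to be larger for each fixed $n$, hence forces $n$ to be larger to sustain the same $\lambda_n$ and the same target $\delta_n$. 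So the entire claim reduces to the inequality $\tau_{\mathrm{hyper}} \ge \tau_{\mathrm{exp}}$, i.e. to showing that the hyper-exponential $\tau$ in Eq.~(\ref{def-tau}) is at least the simplified exponential value $\frac{1+c^2}{2\mu t_1}$ from Remark~\ref{comp-remark}.

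The plan is therefore: first, make precise the monotonicity step — from Eq.~(\ref{rho-real}), for $n$ large, $n(1-\rho_n) \approx \tau(-\ln\delta_n)$, and since $\lambda_n = n\mu\rho_n$, holding $\lambda_n$ fixed and increasing $\tau$ requires increasing $n$; equivalently, solving $n - \tau(-\ln\delta_n) \approx \lambda_n/\mu$ for $n$ shows $n$ is increasing in $\tau$. Second, reduce to comparing the two values of $\tau$. Third, carry out the variance comparison. Write $X$ for the hyper-exponential service time; then $E[X] = \mu^{-1} = \sum_i P_i/\mu_i$ and, from Eq.~(\ref{def-sigma}), $\sigma^2 = \mathrm{Var}(X) = 2\sum_i P_i/\mu_i^2 - (\sum_i P_i/\mu_i)^2$. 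In the exponential case the service-time variance is $\mu^{-2}$, so $\mu^2\sigma^2_{\mathrm{exp}} = 1$. Thus it suffices to show $\mu^2\sigma^2 \ge 1$ for the hyper-exponential, i.e. $\sigma^2 \ge \mu^{-2} = (E[X])^2$, i.e. $E[X^2] \ge 2(E[X])^2$. But for an exponential-$\mu_i$ random variable the second moment is $2/\mu_i^2$, so $E[X^2] = \sum_i 2P_i/\mu_i^2$; and $2(E[X])^2 = 2(\sum_i P_i/\mu_i)^2$. The desired inequality $\sum_i P_i/\mu_i^2 \ge (\sum_i P_i/\mu_i)^2$ is exactly $E[1/\mu_J^2] \ge (E[1/\mu_J])^2$ where $J$ is the index chosen with probability $P_i$ — that is, Jensen's inequality (equivalently, $\mathrm{Var}(1/\mu_J)\ge 0$), with equality iff all $\mu_i$ are equal, i.e. iff $k=1$. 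Hence $\mu^2\sigma^2 \ge 1$ with strict inequality when $k\ge 2$, so $\tau_{\mathrm{hyper}} > \tau_{\mathrm{exp}}$, and the hyper-exponential system needs strictly more servers.

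The main obstacle is not the algebra — the variance comparison above is a one-line Jensen argument — but making the ``needs more servers'' statement rigorous from the asymptotic relations in Proposition~\ref{heavytraffic}. One has to be careful that the comparison is the intended one: we fix the common data $(t_1,\mu,\rho_n,\delta_n)$ and read off $n$ (or $\lambda_n$) from Eq.~(\ref{rho-real}); then the larger constant $\tau$ genuinely translates into a larger server count for the same load, not merely into a different $\delta_n$. Once the roles of the fixed quantities and the derived quantity $n$ are fixed consistently, the conclusion is immediate from $\tau_{\mathrm{hyper}}\ge\tau_{\mathrm{exp}}$ and the fact that, by Eq.~(\ref{rho-real}), $n \sim \lambda_n/\mu + \tau(-\ln\delta_n)$ is increasing in $\tau$.
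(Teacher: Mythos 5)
Your proposal is correct and follows essentially the same route as the paper: both reduce the claim to comparing the constants $\tau$ in Eq.~(\ref{rho-real}) and establish $\mu^2\sigma^2\ge 1$ (equivalently $\sum_i P_i/\mu_i^2 \ge (\sum_i P_i/\mu_i)^2$) by Jensen's inequality, with equality exactly when $k=1$. Your extra remarks making the ``larger $\tau$ implies more servers'' step explicit only elaborate what the paper asserts directly.
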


\begin{proof}
Using Eq. (\ref{def-sigma}) and Eq. (\ref{def-tau}), we obtain
%\begin{equation}
%\begin{split}
%\tau &=\frac{\mu^2\sigma^2+c^2}{2\mu t}\\
%&=\frac{2\sum_{i=1}^k{\left(\frac{P_i}{\mu_i^2}\right)}+(c^2-1)\left(\sum_{i=1}^k{\frac{P_i}{\mu_i}}\right)^2}{2\left(\sum_{i=1}^k{\frac{P_i}{\mu_i}}\right)
%t}.
%\end{split}
%\end{equation}
\begin{equation}
\tau =\frac{\mu^2\sigma^2+c^2}{2\mu
t_1}=\frac{2\sum_{i=1}^k{\left(\frac{P_i}{\mu_i^2}\right)}+(c^2-1)\left(\sum_{i=1}^k{\frac{P_i}{\mu_i}}\right)^2}{2\left(\sum_{i=1}^k{\frac{P_i}{\mu_i}}\right)
t_1}.
\end{equation}

Based on Jensen's Inequality, we can get that
\begin{equation}
\sum_{i=1}^k{\left(\frac{P_i}{\mu_i^2}\right)}\geq
\left(\sum_{i=1}^k{\frac{P_i}{\mu_i}}\right)^2.
\end{equation}

Then,
\begin{equation}
\label{tau} \tau \geq
\frac{(c^2+1)\left(\sum_{i=1}^k{\frac{P_i}{\mu_i}}\right)}{2t_1}=\frac{c^2+1}{2\mu
t_1}.
\end{equation}

Then, in Eq. (\ref{rho-real}), the limit ($\tau$) for
hyper-exponential distributed service time is greater than the limit
($\frac{c^2+1}{2\mu t_1}$) for exponential distributed service time.
Thus, for same $\rho_n$ and $\delta_n$, hyper-exponential
distributed service time needs more servers than exponential
distributed service time.

Consider Eq. (\ref{condition}) which defines the hyper-exponential
service time, we can also get that Eq. (\ref{tau}) achieves equality
if and only if $k=1$.
\end{proof}

Next, we will use the results obtained in Propositions~\ref{hyper}
and \ref{heavytraffic} to compute heavy traffic limits when the
cloud has different QoS requirements. \emph{These results will then
provide guidelines on how many machines to keep active to meet the
QoS requirements of the cloud.}

\section{Applications in Cloud Computing}

\label{cloud} The concept of cloud computing can be traced back to
the 1960s, when John McCarthy claimed that ``computation may someday
be organized as a public utility'' \cite{early}. In recent years,
cloud computing has received increased attention from the industry
\cite{Berkeley}. Many applications of cloud computing, such as
utility computing \cite{uc}, Web 2.0 \cite{Web2_core}, Google app
engine \cite{Google_App}, Amazon web services \cite{EC2,S3} and
Microsoft's Azure services platform \cite{micro_azure}, are widely
used today. Some future application opportunities are also discussed
by Michael Armbrust et al.\ in \cite{Berkeley}. With the rapid
growth of cloud based applications, many definitions, concepts, and
properties of cloud computing have emerged
\cite{Berkeley,Define21,When,CisC,Define}. Cloud computing is an
attractive alternative to the traditional dedicated computing model,
since it makes such services available at a lower cost to the end
users \cite{Berkeley,Market}. In order to provide services at a low
cost, the cost of operating the cloud itself, needs to be kept low.
In \cite{cloudcost}, based on detailed cost analysis of the cloud,
30\% of the ongoing cost is electrical utility costs, and more than
70\% of the ongoing cost is power-related cost which also includes
power distribution and cooling costs. Some typical companies, like
Google, have already claimed that their annual energy costs exceed
their server costs \cite{google_power_cost}. And the power
consumption of Google is 260 million watts
\cite{google_power_amount}. So, power related cost, which is
directly dependent on the {\it number of operational machines} in
the cloud, is a significant fraction of the total cost of operating
a cloud.

\begin{figure}
\centering
\includegraphics[width=0.6\textwidth]{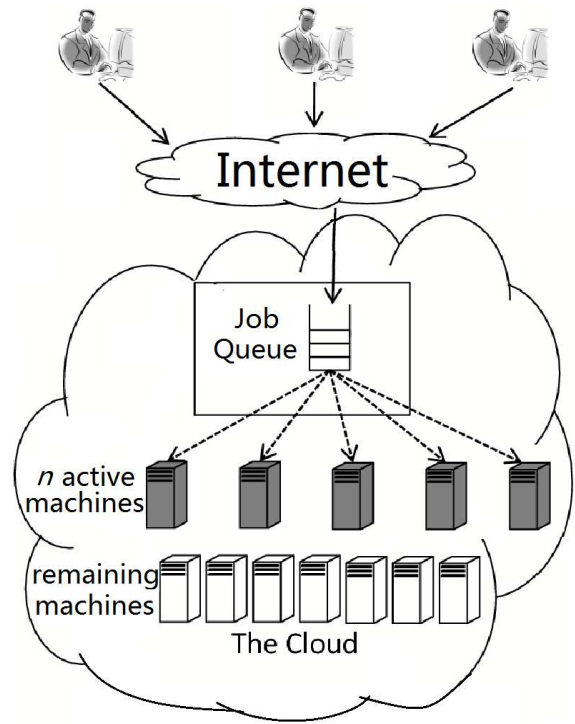}
\caption{Cloud Computing Architecture} \label{arch}
\end{figure}

In \cite{CisC}, P. McFedries points out that clouds are typically
housed in massive buildings and may contain thousands of machines.
This claim is consistent with the fact that large data centers today
often have thousands of machines \cite{Berkeley}. The service system
of a cloud can be viewed as a queueing system. Based on the
stability and efficiency discussions in Section~\ref{intro}, we
focus on the behavior of a cloud in the heavy traffic scenarios.
Figure~\ref{arch} shows the basic architecture. Using the new set of
heavy traffic limit results developed in
Section~\ref{htl_analysis_c2} and \ref{htl_analysis_c3}, we can
achieve the design criteria of power efficient cloud computing
environment, which allows for general and independent arrival
processes and hyper-exponential distributed service times.

\subsection{Heavy Traffic Limits for Different Classes of Clouds}

As discussed earlier, it is important that the cloud operates
stably, which means that the traffic intensity $\rho_n$ should be
less than 1. Further, the cloud also needs to work efficiently,
which means that the traffic intensity $\rho_n$ should be as close
to 1 as possible and should approach 1 as $n\rightarrow \infty$. The
different classes of clouds will result in different heavy traffic
limits, and will thus be governed by different design rules for the
number of operational machines $n$ and traffic intensity $\rho_n$.
From the known literature
\cite{impatient,dim,q1,q2,kingman1965,koll1,koll2}, one can easily
derive the heavy traffic limits for the ZWT and PWT classes. The
derivation is also explicitly shown in our technical report
\cite{Yousi}, and so, here, to save space, we simply state how $n$
and $\rho_n$ should scale to satisfy the QoS requirements of various
clouds.

\subsubsection{ZWT Class} For a cloud of ZWT Class, using
Proposition~\ref{hyper}, we observe that
\begin{equation}
\label{class1} (1-\rho_n)\sqrt{n} \rightarrow \infty,
\end{equation}
from Eqs.\ (\ref{whitt1})-(\ref{whitt3}). If we define $f(n)$ as
$1-\rho_n$, then
\begin{equation}
\begin{split}
\label{new_class1}
&\lim\limits_{n\rightarrow\infty}f(n)=0,\\
&\lim\limits_{n\rightarrow\infty}f(n)\sqrt{n}=\infty.
\end{split}
\end{equation}

\subsubsection{MWT Class}
Applying the result of Proposition~\ref{hyper}, we can show that the
QoS of a cloud of MWT Class can be satisfied if
\begin{equation}
\label{class2}L\le\lim\limits_{n\rightarrow
\infty}(1-\rho_n)\sqrt{n}\le U.
\end{equation}
$U$ and $L$ can be computed from
Eq.~(\ref{upper-bound})--Eq.~(\ref{whitt-bound3}) in
Proposition~\ref{hyper}.

\subsubsection{BWT Class}
We can satisfy the QoS requirement of this class by applying
Proposition~\ref{heavytraffic} to obtain
\begin{equation}
\label{class3}\lim\limits_{n\rightarrow\infty}\frac{(1-\rho_n)n}{-\ln{\delta_n}}=\tau,
\end{equation}
where $\tau$ can be computed by Eq.~(\ref{def-tau}) and
Eq.~(\ref{def-sigma}).

For a cloud of BWT Class, not all functions $\delta_n$, which
decrease to 0, as $n$ goes to infinity, can satisfy the condition.
An appropriate $\delta_n$ that can be used to satisfy the QoS of BWT
Class should satisfy the condition Eq.~(\ref{rho-sqrt-n}) given in
Proposition~\ref{heavytraffic}. Then, the waiting time of jobs for
BWT Class is between 0 and $t$ almost surely as $n\rightarrow
\infty$.

\subsubsection{PWT Class} The QoS requirement of a cloud of PWT Class cloud based on Eq.\
(\ref{kingman}) satisfies
$$P\{W_n\geq t_2\}\sim e^{-\frac{2n\mu (1-\rho)t_2}{\mu^2\sigma^2+c^2}}.$$

For a cloud of PWT Class, to satisfy its QoS requirement, the
traffic intensity must scale as
\begin{equation}
\label{class4}\lim\limits_{n\rightarrow \infty}(1-\rho_n)n=\gamma,
\end{equation}
where
$$\gamma=\frac{-(\mu^2\sigma^2+c^2)\ln{\delta}}{2\mu t_2}.$$

Here, $\mu$ and $\sigma$ are same as Eq.~(\ref{def-sigma}).

\subsection{Number of Operational Machines for Different Classes}
\label{op-num} As discussed in Section~\ref{intro}, an important
motivation of cloud computing is to maximize the workload that the
cloud can support and at the same time satisfy the QoS requirements
of the users. Based on the heavy traffic limits shown in Sections
\ref{htl_analysis_c2} and \ref{htl_analysis_c3}, we have different
heavy traffic limits for different cloud classes (The details of the
ZWT and PWT classes are shown in our technical report \cite{Yousi}).
Thus, in order for the cloud to work efficiently and economically,
we need to compute the least number of machines that the cloud needs
to continue operating for a given QoS requirement.

When $\rho$ is closed to $1$ and $n$ is large, the heavy traffic
limit is a good methodology to approximate the relationship between
$\rho$ and $n$. Based on the heavy traffic limits, we list the
minimum number of machines that the cloud needs to provide under
four classes of clouds, as below.

\begin{itemize}
\item
\underline{The ZWT class:} The $\rho_n$ and $n$ satisfy that
$1-\rho_n \sim f(n)$. Then, the number of operational machines $n$
is $\lceil f^{-1}(1-\rho)\rceil$.

\item
\underline{The MWT class:} The $\rho_n$ and $n$ satisfy that $L\leq
(1-\rho_n)\sqrt{n} \leq U$. Then, for the number of optimal machines
$n$, the lower bound is $\lceil(\frac{L}{1-\rho})^2\rceil$, and the
upper bound is $\lceil(\frac{U}{1-\rho})^2\rceil$.

\item
\underline{The BWT class:} The $\rho_n$ and $n$ satisfy that
$\frac{(1-\rho_n)n}{-\ln{\delta_n}}=\tau$. Then, the number of
operational machines $n$ is $\lceil\frac{\tau
\ln{\delta_n}}{\rho-1}\rceil$.

\item
\underline{The PWT class:} The $\rho_n$ and $n$ satisfy that
$(1-\rho_n)n=\gamma$. Then, the number of operational machines $n$
is $\lceil\frac{\gamma}{1-\rho}\rceil$.

\end{itemize}

%Here, for the ZWT and PWT classes, the parameter $\gamma$ and
%function $f$ are same as Section~\ref{model}. For the MWT class, the
%parameters $L$ and $U$ have the same expression as
%Proposition~\ref{hyper}. For the BWT class, the parameter $\tau$ and
%function $\delta_n$ are same as Proposition~\ref{heavytraffic}.

Since there are many advanced techniques that can be used to
estimate the parameter $\rho$ and this is not the main focus of this
paper, we assume that the parameter $\rho$ can be estimated from the
data. The number of machines can then be determined by the QoS
requirements and the estimated $\rho$, as shown above.

\section{Numerical Analysis}
\label{evaluation}

\subsection{Evaluation Setup}

We assume that the cloud can accommodate at most $N$ machines.
Clearly, to reduce power consumption, we want to keep the number of
powered servers to a minimum while at the same time satisfying the
corresponding QoS requirements. The parameters for the four classes
are as follows:

\begin{enumerate}
\item
For the ZWT class, we choose $f(n)=n^{-k_1}$, where $k_1=0.25$.
\item For the MWT class, we choose the waiting probability
$\alpha=0.005$.
\item
For the BWT class, we choose $\delta_n=\exp{(-n^{\frac{1}{4}})}$,
which satisfies Eq.\ (\ref{rho-sqrt-n}), and $t_1=0.5$.
\item For the PWT class, we choose the probability threshold $\delta=0.1$ and $t_2=1$.
\end{enumerate}

\subsection{Necessity of Class-based Design}
We first choose a simple process--Poisson process--for arrivals, and
choose an exponential service time distribution (i.e., $\mu=0.3$,
which is the simplest case of the hyper-exponential distribution).

The results characterizing the relationship between the number $n$
of requested machines and the traffic intensity $\rho$ are shown in
Fig.~\ref{totalrho} for $N=10000$. The figure shows that with a
larger pool of machines, not only a large number of jobs, but also a
higher intensity of the offered load can be sustained, especially
for clouds with more stringent QoS requirements.

\begin{figure}
\centering
\includegraphics[width=0.6\textwidth]{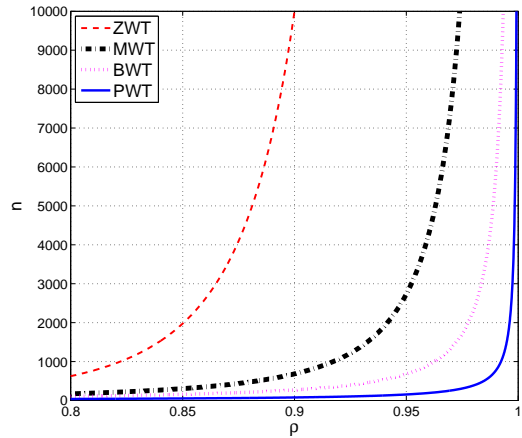}
\caption{Operational Number of Machines for Exponential Distributed
Service Time} \label{totalrho}
\end{figure}

From Fig.~\ref{totalrho}, we can also see that the number of
machines needed for a given value of $\rho$ is quite different for
different QoS classes. Classes with higher QoS require several times
more machines than classes with lower QoS under the same traffic
intensity $\rho$, which implies that different number of operational
machines are necessary for different QoS classes, even for the
simplest case in our scenarios.

In Fig.~\ref{hyperrho}, we now choose a hyper-exponential
distributed service time, with $\bm{\mu}=[1\ 8\ 20]$ and
$\bm{P}=[0.6\ 0.25\ 0.15]$. The results characterizing the
relationship between the number $n$ of requested machines and the
traffic intensity $\rho$ are shown in Fig.~\ref{hyperrho} for
$N=10000$. The figure is similar to the exponential distributed
service time case shown in Fig.~\ref{totalrho}. The difference is
that there are only upper and lower bounds for the MWT class in this
scenario. However, even though there is a certain gap between the
upper and lower bounds for the MWT class, the number of requested
operational machines is still different from other classes when $n$
is large enough.

\begin{figure}
\centering
\includegraphics[width=0.6\textwidth]{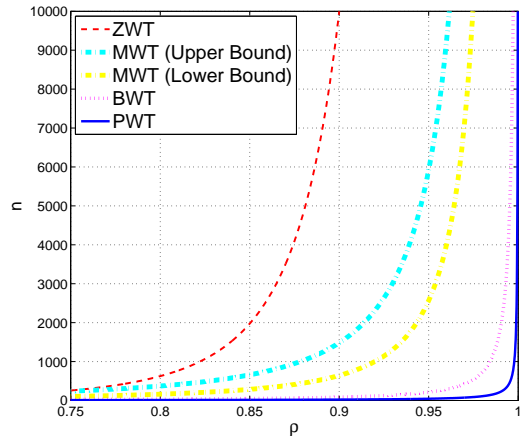}
\caption{Operational Number of Machines for Hyper-exponential
Distributed Service Time}\label{hyperrho}
\end{figure}

Note that Figs.~\ref{totalrho} and \ref{hyperrho} can also be used
to find the maximal traffic intensity a cloud can support while
satisfying a given QoS requirement for a given number of machines in
the cloud.

\begin{figure}
\centering
\includegraphics[width=0.6\textwidth]{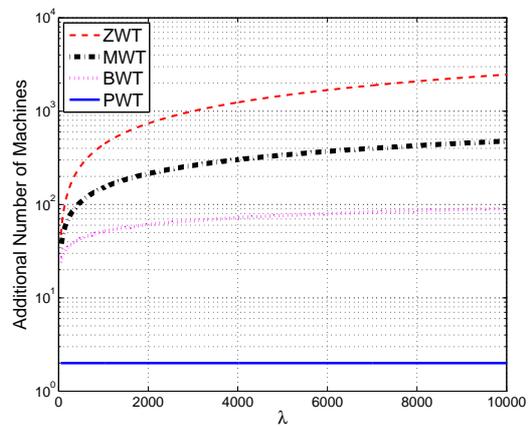}
\caption{Additional Operational Machines for Exponential Distributed
Service Time} \label{totalrho_lambda}
\end{figure}

\begin{figure}
\centering
\includegraphics[width=0.6\textwidth]{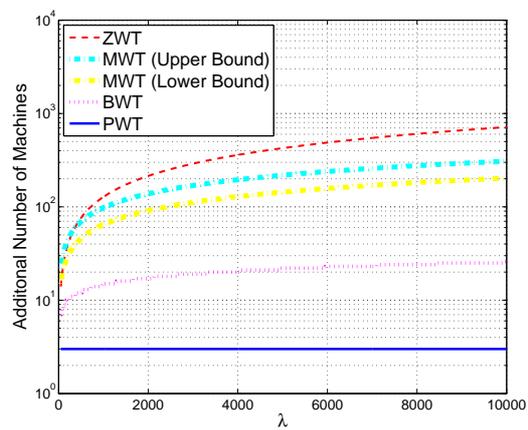}
\caption{Additional Operational Machines for Hyper-exponential
Distributed Service Time}\label{hyperrho_lambda}
\end{figure}

Given an arrival rate $\lambda$, the basic request number of
machines is equal to $\frac{\lambda}{\mu}$. However, it is not
enough to satisfy the different QoS requirements. For different QoS
requirements, the corresponding number of machines are shown in
Figs.~\ref{totalrho_lambda} and \ref{hyperrho_lambda}.
Figs.~\ref{totalrho_lambda} and \ref{hyperrho_lambda} are under the
same scenarios as Figs.~\ref{totalrho} and \ref{hyperrho}
correspondingly. From these two figures, we can see that, for the
same arrival rate, different classes need different additional
number of machines to satisfy different QoS requirements. Similarly,
given a arrival rate $\lambda$, the heaviest traffic intensity the
system can support under a given QoS requirement is shown in
Figs.~\ref{totalrho_rho} and \ref{hyperrho_rho}.

\begin{figure}
\centering
\includegraphics[width=0.6\textwidth]{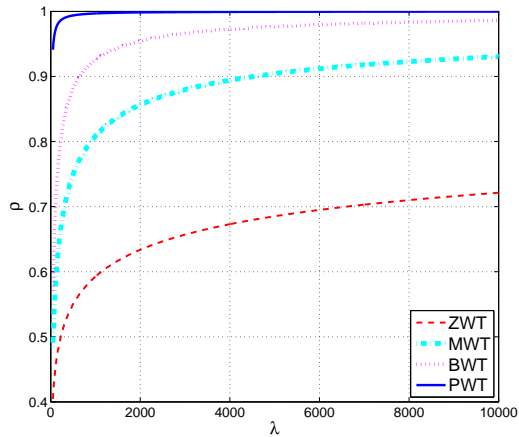}
\caption{Traffic Intensity for Exponential Distributed Service Time}
\label{totalrho_rho}
\end{figure}

\begin{figure}
\centering
\includegraphics[width=0.6\textwidth]{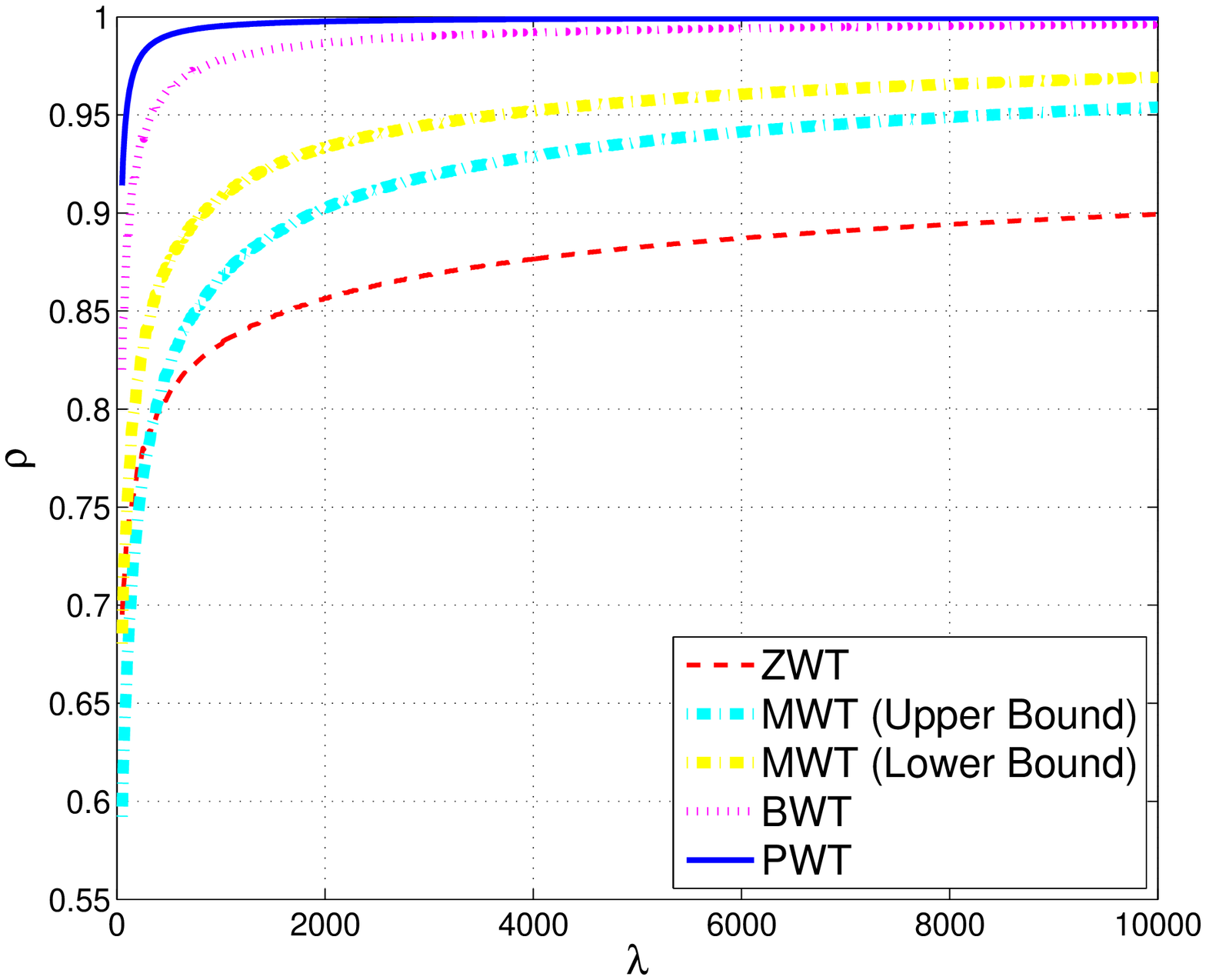}
\caption{Traffic Intensity for Hyper-exponential Distributed Service
Time}\label{hyperrho_rho}
\end{figure}

\subsection{Evaluation for the MWT and BWT Classes}

\begin{figure}
\centering
\includegraphics[width=0.6\textwidth]{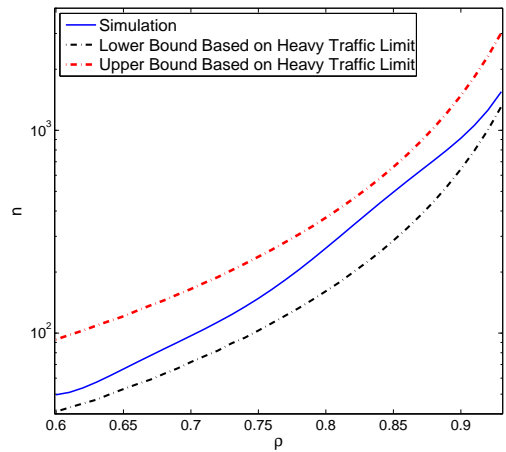}
\caption{Simulation results for the queueing systems of the MWT
Class (Log Y-Axis)} \label{hyper_log}
\end{figure}

For the MWT class, we also choose the same distribution of service
time as above (i.e., $\bm{\mu}=[1\ 8\ 20]$ and $\bm{P}=[0.6\ 0.25\
0.15]$) as an example. The performance of the MWT class is shown in
Fig.~\ref{hyper_log}.

We define a ratio to evaluate the tightness of the upper and lower
bounds for the clouds of the MWT class as below.
\begin{equation}
r\triangleq \frac{U}{L}=r_1r_2,
\end{equation}
where
\begin{equation}
\begin{split}
r_1&=\frac{\psi_U}{\psi_L},\\
r_2&=\frac{\sum\limits_{i=1}^k{\left((1+\frac{c^2-1}{2}P_i)\sqrt{\frac{P_i}{\mu_i}}\right)}}{\max\limits_{i\in{\{1,...k\}}}
\left\{(1+\frac{c^2-1}{2}P_i)\sqrt{{\frac{P_i}{\mu_i}}}\right\}}.
\end{split}
\end{equation}

For a given $k$, $r_1$ and $r_2$ are independent. $r_2$ is
determined by how the sum
$\sum\limits_{i=1}^k{\left((1+\frac{c^2-1}{2}P_i)\sqrt{\frac{P_i}{\mu_i}}\right)}$
dominates the largest item $\max\limits_{i\in{\{1,...k\}}}
\left\{(1+\frac{c^2-1}{2}P_i)\sqrt{{\frac{P_i}{\mu_i}}}\right\}$.
Its domain is interval $[1,k]$. $r_1$ is determined by parameter $k$
and $\alpha$, and is independent of $\bm{P}$ and $\bm{\mu}$. For
different values of $k$ and $\alpha$, the corresponding ratio $r_1$
is shown in Fig.~\ref{r1}. From Fig.~\ref{r1}, we can see that $r_1$
is typically a small constant, even when $\alpha$ and $k$ are large
(e.g. if $\alpha=0.15$ and $k=20$, then $r_1$ is less than 2).

\begin{figure}
\centering
\includegraphics[width=0.6\textwidth]{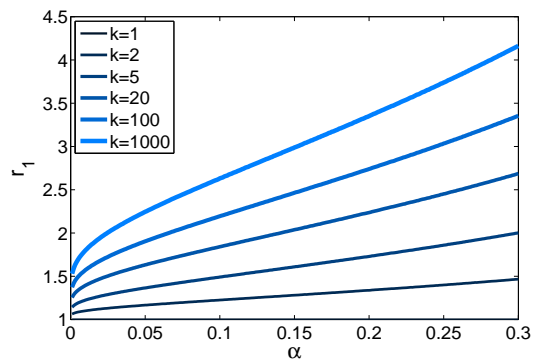}
 \caption{Ratio $r_1$} \label{r1}
\end{figure}

The performance of the BWT class is shown in
Fig.~\ref{hyper_class3}.

\begin{figure}
\centering
\includegraphics[width=0.6\textwidth]{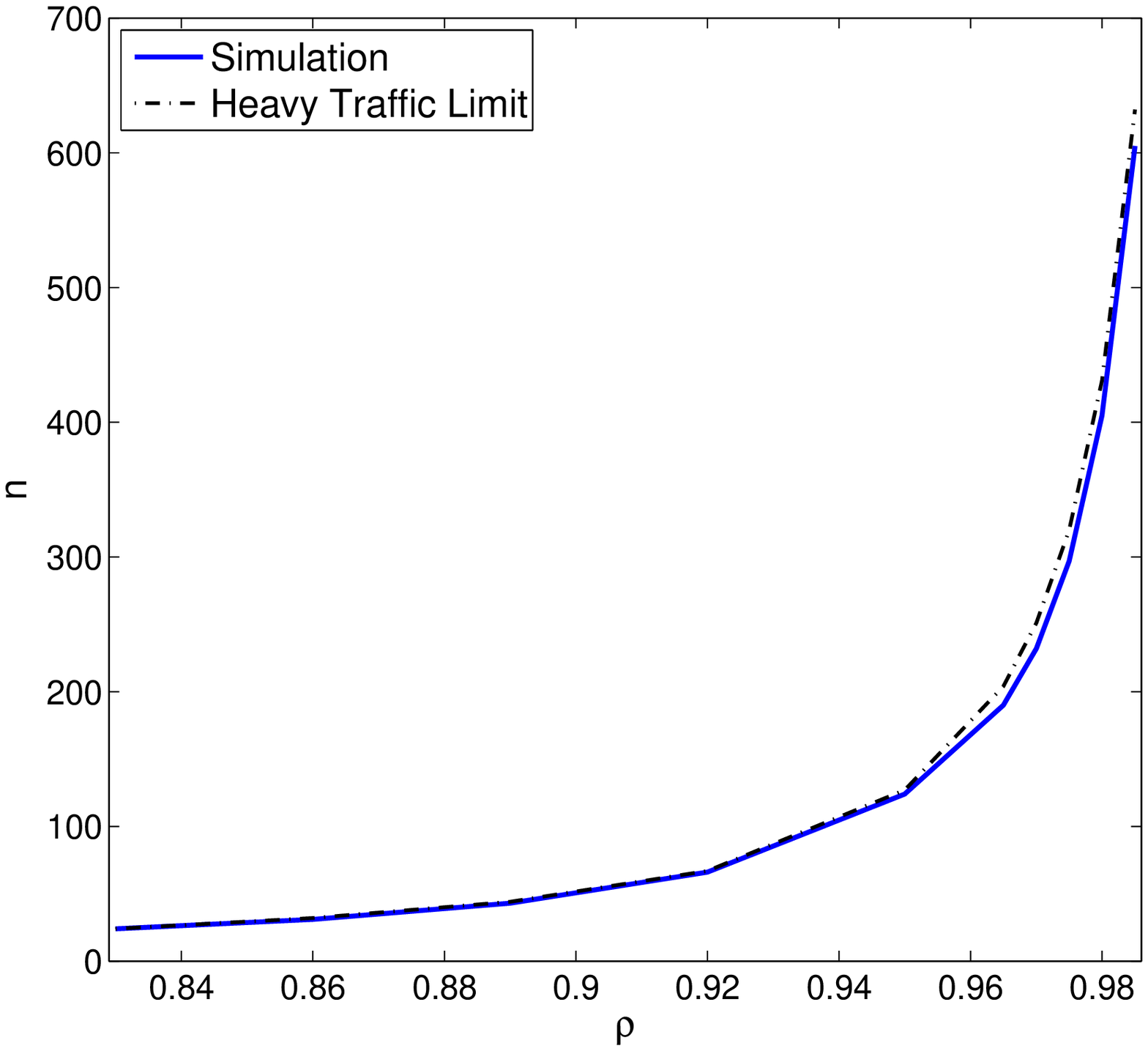}
\caption{Simulation results for the queueing systems of the BWT
Class} \label{hyper_class3}
\end{figure}

%we can see that the design method based on heavy traffic limits is
%close to the simulation results, even through $n$ is finite, but
%large.

For non-Poisson arrival processes, we also select 2-state Erlang
distribution and deterministic distribution as examples. The
simulation results for the MWT and BWT classes are shown in
Figs.~\ref{fig:other_MWT} and \ref{fig:other_BWT}.

\begin{figure*}
\centering

\centerline{\subfigure[The MWT Class with Erlang Arrivals ($Er_2$)]{
\includegraphics[width=0.5\textwidth]{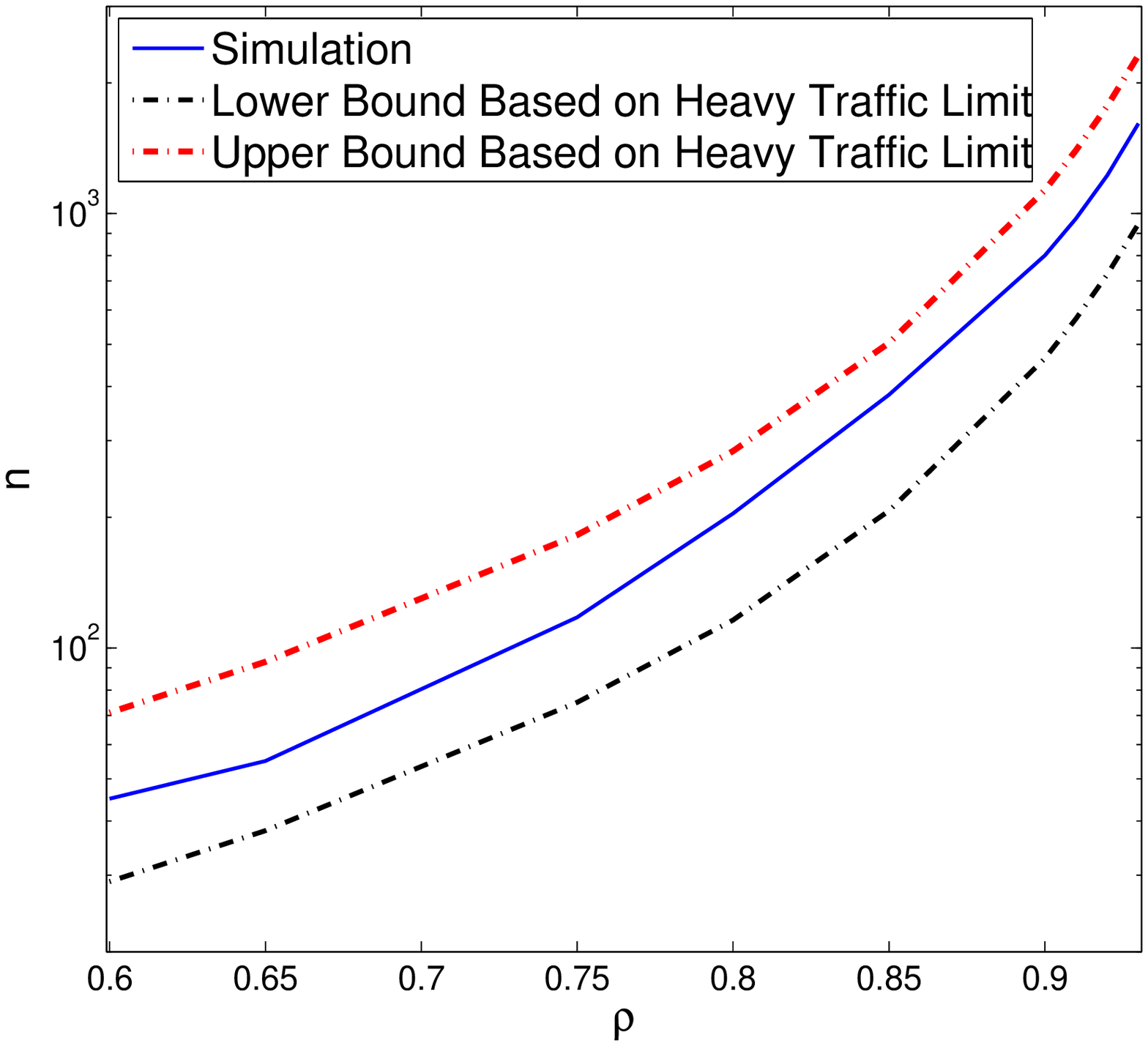}
\label{fig:class2_erlang} }\hspace{0.01\textwidth} \subfigure[The
MWT class with Deterministic Arrivals]{
\includegraphics[width=0.5\textwidth]{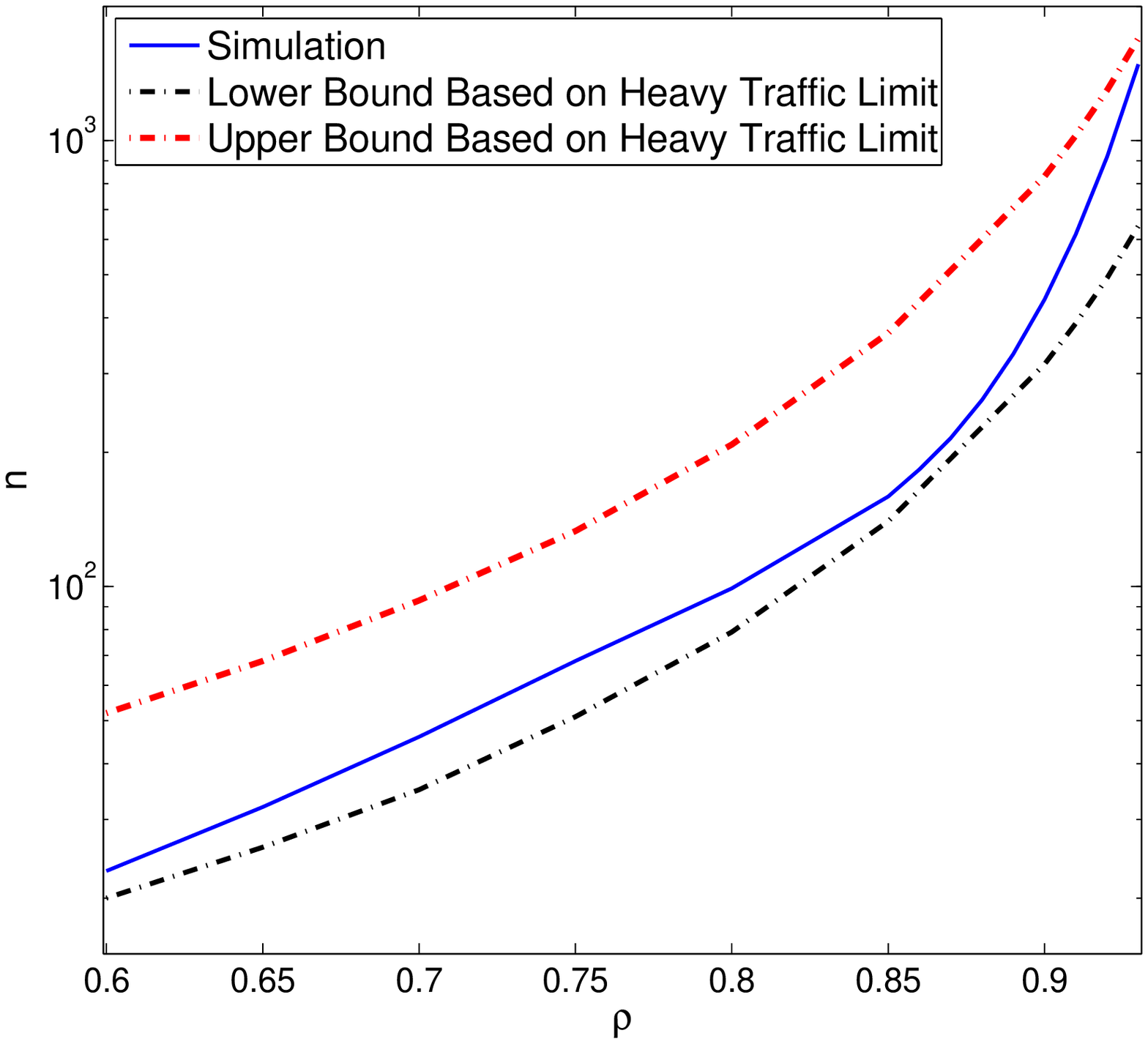}
\label{fig:class2_d} } } \caption{Simulation Results for the MWT
class with Other Arrival Processes (Log Y-Axis)}
\label{fig:other_MWT}
\end{figure*}

\begin{figure*}
\centering \centerline{\subfigure[The BWT class with Erlang Arrivals
($Er_2$)]{
\includegraphics[width=0.5\textwidth]{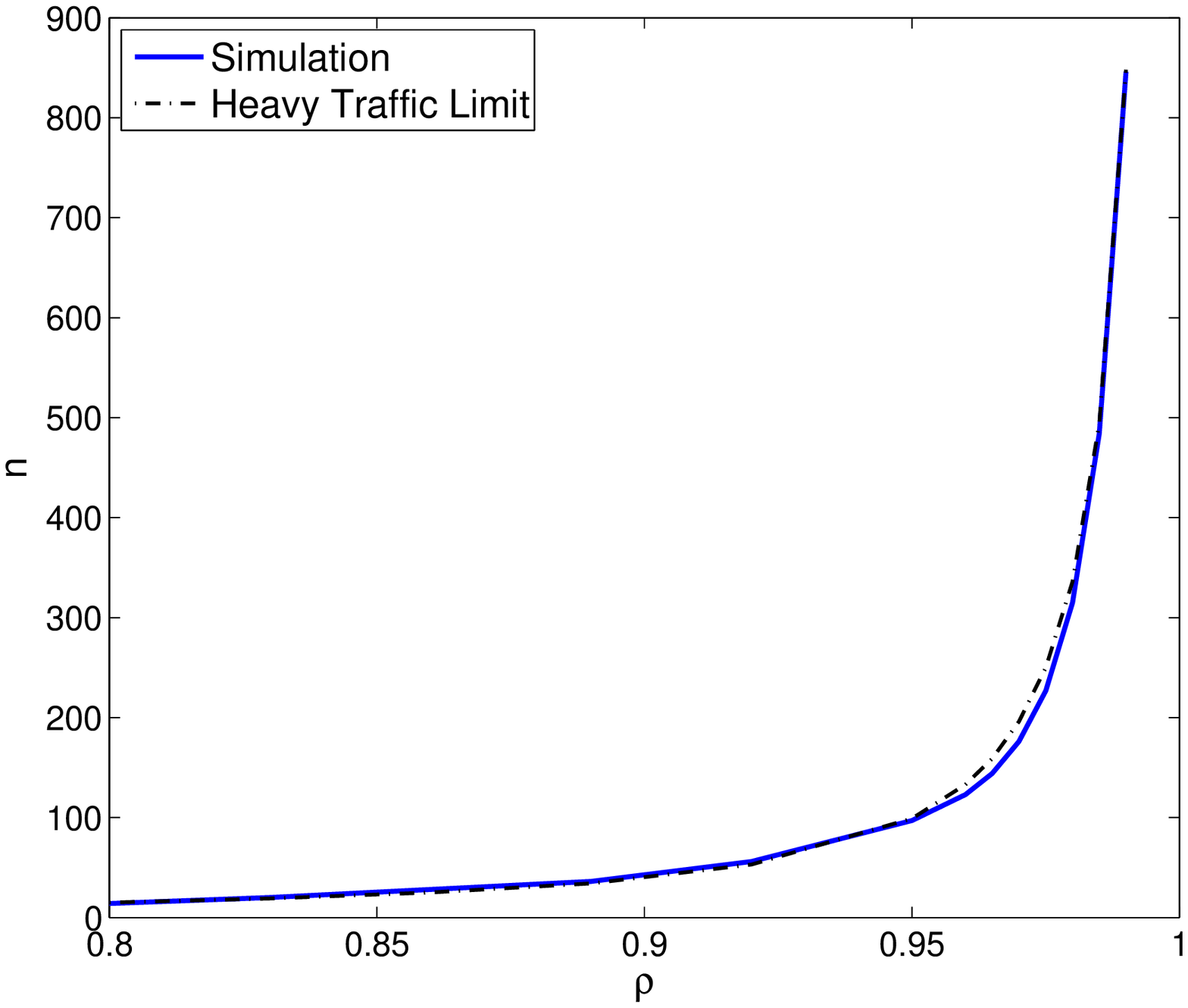}
\label{fig:class3_erlang} } \hspace{0.01\textwidth} \subfigure[The
BWT Class with Deterministic Arrivals]{
\includegraphics[width=0.5\textwidth]{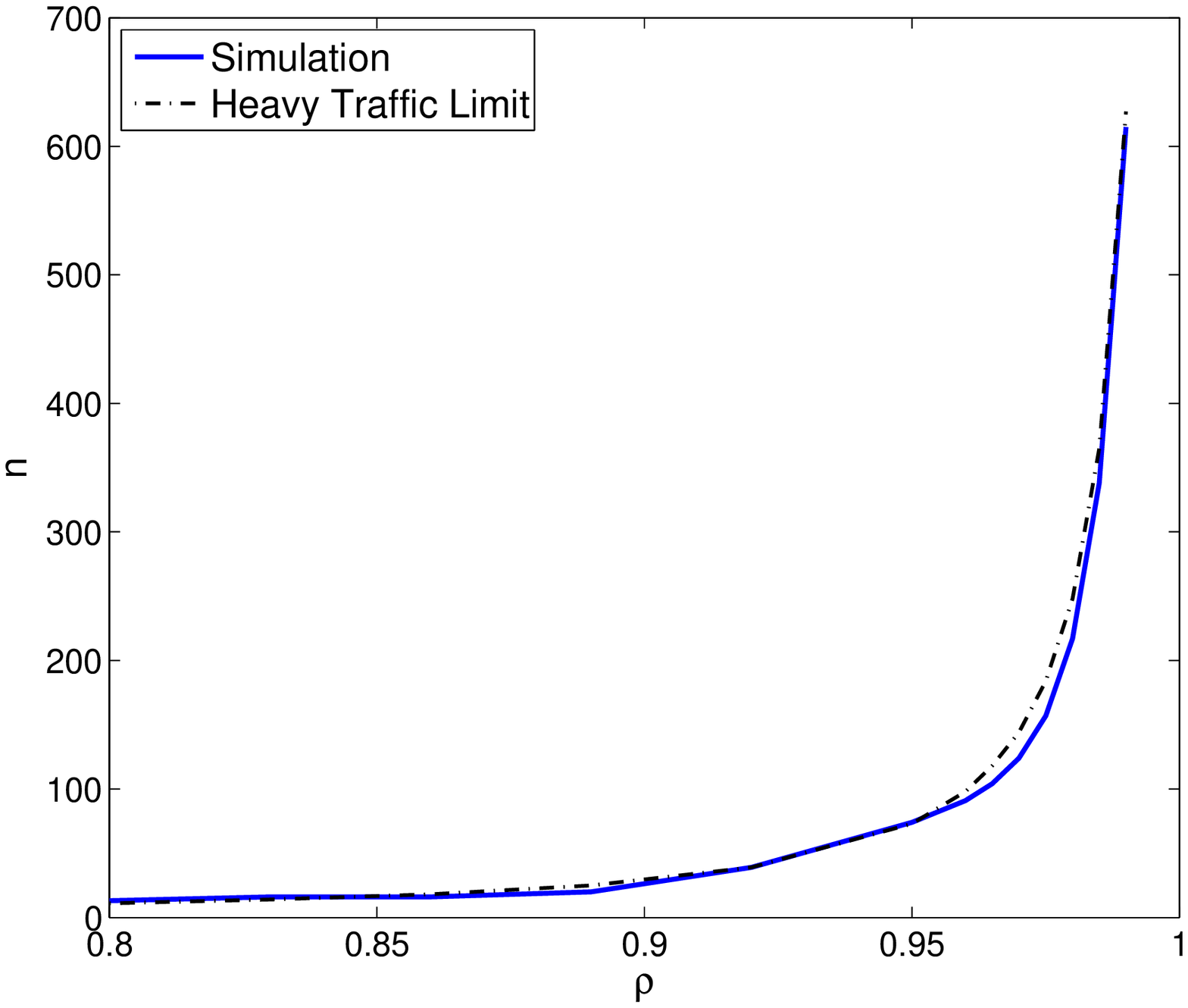} \label{fig:class3_d} } }

\caption{Simulation Results for the BWT class with Other Arrival
Processes } \vspace{-1em} \label{fig:other_BWT}
\end{figure*}

We have used the heavy traffic limit results to design the cloud for
finite values of $n$ in Figs.~\ref{hyper_log}, \ref{hyper_class3},
\ref{fig:other_MWT} and \ref{fig:other_BWT}. From these figures, we
observe that the simulation results closely follow the result
obtained from the heavy traffic analysis even when the number of
machines is not very large (e.g., only 100) and traffic is not very
heavy (e.g., $\rho=0.85$).

\section{Conclusion}
\label{conclude} In this paper, we study the heavy traffic limits of
GI/H/n queues. First, we classify the queueing systems into four
classes based on the QoS requirements. Then, we develop heavy
traffic limits that characterize the performance of the queueing
systems for different types of QoS requirements. For the MWT and BWT
classes, new heavy traffic limits are derived. Based the analysis of
heavy traffic limits for different classes in this paper and
existing results, we show the relationship between heavy traffic
limits and QoS requirements to obtain design rules in the cloud
computing environment as an application. The numerical results show
that different rules are necessary for computing the number of
operational machines for different cloud classes, and show the
performance of the heavy traffic limits when the number of
operational machines is finite. In the future, we plan to extend our
work to jobs that need multiple servers or multiple stages, and
apply it to improve widely-used frameworks, such as MapReduce.

\bibliographystyle{spmpsci}

\bibliography{Group}

\end{document}